\DeclarePairedDelimiter{\braces}{\{}{\}}            
\DeclarePairedDelimiter{\parens}{(}{)}              
\DeclarePairedDelimiterX{\setdef}[2]{\{}{\}}{#1 : #2}        
\DeclarePairedDelimiterXPP{\exclude}[1]{\mathopen{}\setminus}{\{}{\}}{}{#1}
\tikzstyle{vertex}=[circle, draw, inner sep=2pt,  minimum width=1 pt, minimum size=0.1cm]
\tikzstyle{black_vertex}=[circle, draw, inner sep=2pt, fill=black!100, minimum width=1pt, minimum size=0.1cm]
\tikzstyle{gray_vertex}=[circle, draw, inner sep=2pt, fill=lightgray, minimum width=1pt, minimum size=0.1cm]
\newcommand{\bO}{{\cal O}}
\newcommand{\paramstyle}[1]{\text{\rm #1}}
\newcommand{\td}{\paramstyle{td}}
\newcommand{\vc}{\paramstyle{vc}}
\newcommand{\mw}{\paramstyle{mw}}
\newcommand{\ml}{\paramstyle{ml}}
\newcommand{\fvs}{\paramstyle{fvs}}
\newcommand{\vi}{\paramstyle{vi}}
\newcommand{\wvi}{\paramstyle{wvi}}
\newcommand{\fes}{\paramstyle{fes}}
\newcommand{\cc}{\texttt{cc}}
\newcommand{\OPT}{\operatorname{OPT}}
\newcommand{\UBP}{{\sc Unary Bin Packing}}
\newcommand{\BDD}{{\sc Bounded Degree Vertex Deletion}}
\newcommand{\RUBP}{{\sc Restricted Unary Bin Packing}}
\newcommand{\VI}{{\sc Vertex Integrity}}
\newcommand{\UVI}{{\sc Unweighted Vertex Integrity}}
\newcommand{\WVI}{{\sc Weighted Vertex Integrity}}
\newcommand{\SWVI}{{\sc Semi-Weighted Vertex Integrity}}
\newcommand{\COC}{{\sc Component Order Connectivity}}
\newcommand{\SWCOC}{{\sc Semi-Weighted Component Order Connectivity}}
\newcommand{\Z}{\mathbb{Z}}
\newcommand{\N}{\mathbb{N}}
\newlength{\RoundedBoxWidth}
\newsavebox{\GrayRoundedBox}
\newenvironment{GrayBox}[1]{
        \setlength{\RoundedBoxWidth}{.93\textwidth}
        \def\boxheading{#1}
        \begin{lrbox}{\GrayRoundedBox}
        \begin{minipage}{\RoundedBoxWidth}
    }{
        \end{minipage}
        \end{lrbox}
        \begin{center}
            \begin{tikzpicture}%
               \node(Text)[draw=black!20,fill=white,rounded corners,%
                     inner sep=2ex,text width=\RoundedBoxWidth]%
                     {\usebox{\GrayRoundedBox}};
                \coordinate(x) at (current bounding box.north west);
                \node [draw=white,rectangle,inner sep=3pt,anchor=north west,fill=white]
                at ($(x)+(6pt,.75em)$) {\boxheading};
            \end{tikzpicture}
        \end{center}
    }
\newenvironment{defproblemx}[2][]
    {
        \noindent\ignorespaces%
        \FrameSep=6pt%
        \parindent=0pt%
        \vspace*{-1.5em}
        \ifthenelse{\isempty{#1}}{%
            \begin{GrayBox}{\textsc{#2}}%
            }{%
              \begin{GrayBox}{\textsc{#2} parameterized by~{#1}}%
            }
        \begin{tabular*}{\textwidth}{@{\hspace{.1em}} >{\itshape} p{1.8cm} p{0.8\textwidth} @{}}%
    }{
        \end{tabular*}%
        \end{GrayBox}%
        \ignorespacesafterend
    }
\newcommand{\defproblema}[3]{%
    \begin{defproblemx}{#1}
    {\bf Instance:}  & #2 \\
    {\bf Goal:} & #3
    \end{defproblemx}
}%
\newcommand{\problemdef}[3]{\defproblema{#1}{#2}{#3}}
\title{Parameterized Vertex Integrity Revisited}
\titlerunning{Parameterized Vertex Integrity Revisited}
\author{Tesshu Hanaka}{Department of Informatics, Kyushu University, Fukuoka, Japan}{hanaka@inf.kyushu-u.ac.jp}{https://orcid.org/0000-0001-6943-856X}{}
\author{Michael Lampis}{Universit\'{e} Paris-Dauphine, PSL University, CNRS UMR7243, LAMSADE, Paris, France}{michail.lampis@dauphine.fr}{https://orcid.org/0000-0002-5791-0887}{}
\author{Manolis Vasilakis}{Universit\'{e} Paris-Dauphine, PSL University, CNRS UMR7243, LAMSADE, Paris, France}{emmanouil.vasilakis@dauphine.eu}{https://orcid.org/0000-0001-6505-2977}{}
\author{Kanae Yoshiwatari}{Department of Mathematical Informatics, Nagoya University, Aichi, Japan}{yoshiwatari.kanae.w1@s.mail.nagoya-u.ac.jp}{https://orcid.org/0000-0001-5259-7644}{}
\authorrunning{T. Hanaka, M. Lampis, M. Vasilakis, and K. Yoshiwatari}
\keywords{Parameterized Complexity, Treedepth, Vertex Integrity}
\begin{document}

\maketitle

\begin{abstract}
Vertex integrity is a graph parameter that measures the connectivity of a graph.
Informally, its meaning is that a graph has small vertex integrity if it has a small separator whose removal 
disconnects the graph into connected components which are themselves also small.
Graphs with low vertex integrity are extremely structured; this renders many hard problems tractable and has
recently attracted interest in this notion from the parameterized complexity community.
In this paper we revisit the NP-complete problem of computing the vertex integrity of a given
graph from the point of view of structural parameterizations.
We present a number of new results, which also answer some recently posed open questions from the literature. 
Specifically: We show that unweighted vertex integrity is W[1]-hard parameterized by treedepth;
we show that the problem remains W[1]-hard if we parameterize by feedback edge set size
(via a reduction from a \textsc{Bin Packing} variant which may be of independent interest);
and complementing this we show that the problem is FPT by max-leaf number.
Furthermore, for weighted vertex integrity, we show that the problem admits a single-exponential FPT
algorithm parameterized by vertex cover or by modular width, the latter result improving upon a previous
algorithm which required weights to be polynomially bounded.
\end{abstract}

\section{Introduction}\label{sec:introduction}

The \emph{vertex integrity} of a graph is a vulnerability measure indicating
how easy it is to break down the graph into small pieces. More precisely,
the vertex integrity $\vi(G)$ of a  graph $G$ is defined as $ \vi(G) =
\min_{S \subseteq V(G)} \{|S| + \max_{D\in \cc(G-S)}|D|\}$, that is, to calculate
the vertex integrity of a graph we must find a separator that minimizes the
size of the separator itself plus the size of the largest remaining connected
component. Intuitively, a graph has low vertex integrity not only when it
contains a small separator, but more strongly when it contains a small
separator such that its removal leaves a collection of small connected
components.

Vertex integrity was first introduced more than thirty years ago by Barefoot et
al.~\cite{Barefoot1987}, but has recently received particular attention from
the parameterized complexity community, because it can be considered as a very
natural structural parameter: when a graph has vertex integrity $k$, large
classes of NP-hard problems admit FPT\footnote{We assume the reader is familiar
with the basics of parameterized complexity, as given e.g. in~\cite{CyganFKLMPPS15}.
We give precise definitions of all parameters in the next section.}
algorithms with running times of the form $f(k)n^{\bO(1)}$~\cite{LampisM21}.
Note that vertex integrity has a clear relationship with
other, well-known structural parameters: it is more restrictive than treedepth,
pathwidth, and treewidth (all these parameters are upper-bounded by vertex
integrity) but more general than vertex cover (a graph of vertex cover $k$ has
vertex integrity at most $k+1$). ``Price of generality'' questions, where one
seeks to discover for a given problem the most general parameter for which an
FPT algorithm is possible, are a central topic in structural parameterized
complexity, and vertex integrity therefore plays a role as a natural stepping
stone in the hierarchy of standard parameters~\cite{BentertHK23,Gaikwad2024a,Gaikwad2024b,GimaHKKO22,GimaO24,LampisM21}.

The investigation of the parameterized complexity aspects of vertex integrity
is, therefore, an active field of research, but it is important to remember
that a prerequisite for any such parameter to be useful is that it should be
tractable to calculate the parameter itself (before we try to use it to solve
other problems). Since, unsurprisingly, computing the vertex integrity exactly
is NP-complete~\cite{Clark1987}, in this paper we focus on this problem from
the point of view of parameterized complexity. We consider both the unweighted,
as well as a natural weighted variant of the problem. Formally, we want to
solve the following:

\problemdef{\textsc{Unweighted (Weighted) Vertex Integrity}}
{A graph $G$ (with binary vertex weights $w : V(G) \to \mathbb{Z}^+$), an integer $k$.}
{Determine whether $\vi(G) \le k$ ($\wvi(G) \le k$).}

The point of view we adopt is that of structural parameterized complexity,
where vertex integrity is the target problem we are trying to solve, and not
necessarily the parameter. Instead, we parameterize by standard structural
width measures, such as variations of treewidth. The questions we would like to
address are of several forms:

\begin{enumerate}
    \item For which structural parameters is it FPT to compute the vertex
    integrity?
    
    \item For which such parameters is it possible to obtain an FPT algorithm with
    single-exponential complexity?
    
    \item For which parameters can the weighted version of the problem be handled
    as well as the unweighted version?
\end{enumerate}

To put these questions in context, we recall some facts from the state of the
art. When the parameter $k$ is the vertex integrity itself, Fellows and
Stueckle show an $\bO(k^{3k}n)$-time algorithm for \UVI~\cite{Fellows1989},
and later Drange et al.~proposed an
$\bO(k^{k+1}n)$-time algorithm even for \WVI~\cite{DrangeDH16},
so this problem is FPT.  More recently, Gima et
al.~\cite{Gima2023} took up the study of vertex integrity in the same
structurally parameterized spirit as the one we adopt here and presented
numerous results which already give some answers to the questions we posed
above. In particular, for the first question they showed that
\UVI{} is W[1]-hard by pathwidth (and hence by
treewidth); for the second question they showed that the problem admits a
single-exponential algorithm for parameter modular-width; and for the third
question they showed that the problem is (weakly) NP-hard on sub-divided stars,
which rules out FPT algorithms for most structural parameters.


\subparagraph*{Our results.} 
Although the results of~\cite{Gima2023} are rather
comprehensive, they leave open several important questions about the complexity
of vertex integrity. In this paper we resolve the questions explicitly left
open by~\cite{Gima2023} and go on to present several other results that further
clarify the picture for vertex integrity. In particular, our results are as
follows (see also \cref{fig:parameters}):

\begin{figure}
    \centering
    \includegraphics[width=0.4\textwidth]{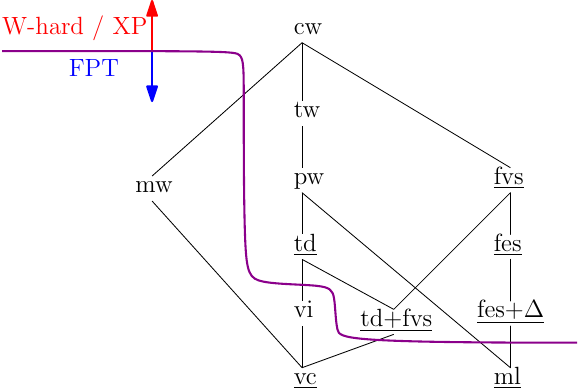}
    \caption{The parameterized complexity of \UVI, with the underlined parameters indicating our results.
    A connection between two parameters implies that the one above generalizes the one below;
    that is, the one below is lower-bounded by a function of the one above.
    All of our FPT algorithms have single-exponential parametric dependence, while the ones for $\vc$ and
    $\mw$ extend to the weighted case as well.}
    \label{fig:parameters}
\end{figure}

The first question we tackle is an explicit open problem from~\cite{Gima2023}:
is \UVI{} FPT parameterized by treedepth? This is a very natural
question, because treedepth is the most well-known parameter that sits between
pathwidth, where the problem is W[1]-hard by~\cite{Gima2023}, and vertex
integrity itself, where the problem is FPT. We resolve this question via a
reduction from \BDD, showing that \UVI{} is W[1]-hard for treedepth (\cref{thm:td}).

A second explicitly posed open question of~\cite{Gima2023} is the complexity of
\UVI{} for parameter feedback vertex set. Taking
a closer look at our reduction from \BDD, which is
known to be W[1]-hard for this parameter, we observe that it also settles this
question, showing that \UVI{} is also hard.
However, in this case we are motivated to dig a little deeper and consider a
parameter, feedback edge set, which is a natural restriction of feedback vertex
set and typically makes most problems FPT. Our second result is to show that
\UVI{} is in fact W[1]-hard even when
parameterized by feedback edge set and the maximum degree of the input graph
(\cref{thm:UVI:fes}).  We achieve this via a reduction from \UBP{} parameterized
by the number of bins, which is W[1]-hard~\cite{JansenKMS13}. An aspect of our
reduction which may be of independent interest is that we use a variant of \UBP{}
where we are given a choice of only two possible bins per item (we observe that
the reduction of~\cite{JansenKMS13} applies to this variant). 

We complement these mostly negative results with a fixed-parameter tractability
result for a more restrictive parameter: we show that \UVI{} is FPT by max-leaf
number (\cref{thm:ml}) indeed by a single-exponential FPT algorithm. Note that
when a graph has bounded max-leaf number, then it has bounded degree and bounded
feedback edge set number, therefore this parameterization is a special case of the
one considered in~\cref{thm:UVI:fes}. Hence, this positive result closely complements
the problem's hardness in the more general case.

Moving on, we consider the parameterization by modular width, and
take a second look at the $2^{\bO(\mw)}n^{\bO(1)}$ algorithm provided
by~\cite{Gima2023}, which is able to handle the weighted case of the problem,
but only for polynomially-bounded weights. Resolving another open problem posed
by~\cite{Gima2023}, we show how to extend their algorithm to handle the general
case of weights encoded in binary (\cref{thm:mw}).

Finally, we ask the question of whether a single-exponential FPT algorithm is
possible for parameters other than max-leaf and modular width. We answer this
affirmatively for vertex cover, even in the weighted case
(\cref{thm:vc:weighted}), obtaining a faster and simpler algorithm for the
unweighted case (\cref{thm:vc:unweighted}).

\subparagraph*{Related work.} 
The concept of vertex integrity is natural enough that it has appeared in many slight variations
under different names in the literature.
We mention in particular, the fracture number~\cite{DvorakEGKO21},
which is the minimum $k$ such that is it possible to delete $k$ vertices
from a graph so that all remaining components have size at most $k$,
and the starwidth~\cite{Ee17}, which is the minumum width of a tree decomposition that is a star.
Both of these are easily seen to be at most a constant factor away from vertex integrity.
Similarly, the safe set number~\cite{BelmonteHKLOO20,FujitaF18} seeks a separator
such that every component of the separator is only connected to smaller components.
These concepts are so natural that sometimes they are used as parameters without an explicit name,
for example~\cite{Saket24} uses the parameter ``size of a deletion set to a collection of
components of bounded size''.
As observed by~\cite{Gima2023}, despite these similarities, sometimes computing these parameters
can have different complexity, especially when weights are allowed.
Another closely related computational problem, that we also use, is the~\COC~\cite{DrangeDH16} problem,
where we are given explicit distinct bounds on the size of the separator sought and the allowed size of
the remaining components.
\section{Preliminaries}\label{sec:preliminaries}
Throughout the paper we use standard graph notation~\cite{Diestel17}
and assume familiarity with the basic notions of parameterized complexity~\cite{CyganFKLMPPS15}.
All graphs considered are undirected without loops.
Given a graph $G$ and $S \subseteq V(G)$,
$G[S]$ denotes the subgraph induced by $S$, while $G - S$ denotes $G[V(G) \setminus S]$.
If we are additionally given a weight function $w : V(G) \to \mathbb{Z}^+$,
$w(S)$ denotes the sum of the weights of the vertices of $S$, i.e.~$w(S) = \sum_{s \in S} w(s)$.
For $x, y \in \Z$, let $[x, y] = \setdef{z \in \Z}{x \leq z \leq y}$,
while $[x] = [1,x]$.
For a set of integers $S \subseteq \mathbb{Z}^+$, let $\Sigma(S)$ denote the sum of its elements,
i.e.~$\Sigma(S) = \sum_{s \in S} s$, while $\binom{S}{c}$ denotes the set of subsets of $S$ of size $c$,
i.e.~$\binom{S}{c} = \setdef{S' \subseteq S}{|S'| = c}$.

\subsection{Vertex Integrity}

For a vertex-weighted graph $G$ with $w : V(G) \to \mathbb{Z}^+$,
we define its \emph{weighted vertex integrity}, denoted by $\wvi(G)$,
as
\[
    \wvi(G) = \min_{S \subseteq V(G)} \braces*{w(S) + \max_{D \in \cc(G-S)} w(D)},
\]
where $\cc(G-S)$ is the set of connected components of $G-S$.
A set $S$ such that $w(S) + \max_{D \in \cc(G-S)} w(D) \le k$ is called a $\wvi(k)$-set.
The \emph{vertex integrity} of an unweighted graph $G$, denoted by $\vi(G)$,
is defined in an analogous way, by setting $w(v) = 1$ for all $v \in V(G)$.
In that case, $S \subseteq V(G)$ is a $\vi(k)$-set if $|S| + \max_{D\in \cc(G-S)}|D| \leq k$.


A vertex $v \in S$ is called \emph{redundant} if at most one connected component of $G-S$ contains neighbors of $v$.
A set $S \subseteq V(G)$ is \emph{irredundant} if $S$ contains no redundant vertex.
Thanks to~\cref{prop:irredundant}, it suffices to only search for irredundant $\wvi(k)$-sets when solving \VI.

\begin{proposition}[\cite{DrangeDH16,Gima2023}]\label{prop:irredundant}
    A graph with a $\wvi(k)$-set has an irredundant $\wvi(k)$-set.
\end{proposition}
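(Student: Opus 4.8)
The plan is to prove the contrapositive of the natural ``cleanup'' direction: given any $\wvi(k)$-set $S$ that contains a redundant vertex, I would show that we can modify it to obtain another $\wvi(k)$-set with strictly fewer redundant vertices (or more simply, with strictly smaller total separator weight, which forces termination). Concretely, let $S$ be a $\wvi(k)$-set and let $v \in S$ be redundant, so that at most one component $D$ of $G - S$ contains a neighbor of $v$. I would form the new set $S' = S \setminus \{v\}$ and analyze $\cc(G - S')$.

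\textbf{Key steps.} First I would observe that adding $v$ back to the graph can only merge $v$ with the unique component $D$ that it has neighbors in (if such a $D$ exists; if not, $v$ becomes its own isolated component, which is an even easier case). Thus the components of $G - S'$ are: all components of $G - S$ other than $D$, unchanged, together with one new component $D' = D \cup \{v\}$ (or $D' = \{v\}$ in the degenerate case, but then $w(D') = w(v) \le w(S)$-type bookkeeping makes this trivial). Second, I would bound the relevant quantities: $w(S') = w(S) - w(v) < w(S)$, and $\max_{D'' \in \cc(G - S')} w(D'') \le \max_{D'' \in \cc(G-S)} w(D'') + w(v)$, since the only component whose weight changed is $D$, which grew by exactly $w(v)$. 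Combining, $w(S') + \max_{D'' \in \cc(G-S')} w(D'') \le w(S) - w(v) + \max_{D'' \in \cc(G-S)} w(D'') + w(v) = w(S) + \max_{D'' \in \cc(G-S)} w(D'') \le k$, so $S'$ is also a $\wvi(k)$-set. Third, since $|S'| < |S|$, I would conclude by iterating: repeatedly removing redundant vertices strictly decreases $|S|$, so the process terminates at an irredundant $\wvi(k)$-set.

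\textbf{Main obstacle.} The only real subtlety is the degenerate case where the redundant vertex $v$ has no neighbors at all in $G - S$ (it is redundant vacuously, with zero components containing its neighbors). Then removing $v$ creates $\{v\}$ as a fresh singleton component of $G - S'$ of weight $w(v)$, and one must check that $\max_{D'' \in \cc(G-S')} w(D'')$ still does not exceed $\max_{D'' \in \cc(G-S)} w(D'')$; this holds because $\max_{D'' \in \cc(G-S)} w(D'') \ge w(v)$ is not automatic, but $w(S) \ge w(v)$ is, and $w(S') + \max_{\cc(G-S')} w \le (w(S) - w(v)) + \max\{w(v), \max_{\cc(G-S)} w\} \le w(S) + \max_{\cc(G-S)} w \le k$ using that the new max is at most $w(S-v)$'s contribution absorbed — more cleanly, the new largest component has weight either unchanged or equal to $w(v) \le w(S)$, and in all cases the sum is bounded by $k$ by the same telescoping inequality. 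I would state this case explicitly to avoid any gap, but it adds no conceptual difficulty. Finally, although the statement is attributed to~\cite{DrangeDH16,Gima2023}, I would present this short self-contained argument for completeness.
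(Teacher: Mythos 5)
Your proof is correct and follows exactly the standard argument that the paper imports from~\cite{DrangeDH16,Gima2023} (and sketches again inside the proof of Lemma~\ref{lem:mw}): deleting a redundant vertex $v$ from $S$ lowers the separator weight by $w(v)$ while raising the weight of at most one component by at most $w(v)$, so the sum cannot increase, and iterating terminates since $|S|$ strictly decreases. Your explicit treatment of the degenerate case where $v$ has no neighbors in $G-S$ is a sensible addition and causes no issue.
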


\subsection{Graph Parameters}

We use several standard graph parameters, so we recall here their definitions
and known relations between them. A graph $G$ has \emph{feedback vertex} (respectively
\emph{edge}) \emph{set} $k$ if there exists a set of $k$ vertices (respectively edges) such
that removing them from $G$ destroys all cycles. We use $\fvs(G)$ and $\fes(G)$
to denote these parameters. Note that even though computing $\fvs(G)$ is
NP-complete~\cite{Karp72}, in all connected graphs with $m$ edges and $n$ vertices
$\fes(G)=m-n+1$. The \emph{vertex cover} of a graph $G$, denoted by $\vc(G)$, is the
size of the smallest set whose removal destroys all edges.
The \emph{treedepth} of a graph $G$ can be defined recursively as follows: $\td(K_1)=1$; if $G$ is
disconnected $\td(G)$ is equal to the maximum of the treedepth of its
connected components; otherwise $\td(G)=\min_{v\in V(G)} \td(G-v)+1$. The
\emph{max-leaf number} of a graph $G$, denoted by $\ml(G)$, is the maximum number of
leaves of any spanning tree of $G$. 

A module of a graph $G=(V,E)$ is a set of vertices $M \subseteq V$ such that for
all $x \in V \setminus M$ we have that $x$ is either adjacent to all vertices of
$M$ or to none. The \emph{modular width} of a graph $G=(V,E)$
(\cite{GajarskyLMMO15,GajarskyLO13}) is the smallest integer $k$ such that,
either $|V| \le k$, or $V$ can be partitioned into at most $k' \leq k$ sets $V_1,\ldots,
V_{k'}$, with the following two properties: (i) for all
$i \in [k']$, $V_i$ is a module of $G$,
(ii) for all $i \in [k']$, $G[V_i]$ has modular
width at most $k$.

Let us also briefly explain the relations depicted in~\cref{fig:parameters}.
Clearly, for all $G$, we have $\fvs(G) \le \fes(G)$, because
we can remove from the graph one endpoint of each edge of the feedback edge
set. It is known that if a graph has $\ml(G)=k$, then $G$ contains at most
$\bO(k)$ vertices of degree $3$ or more (\cref{lem:kw}), and clearly such a graph
has maximum degree at most $k$. Since vertices of degree at most $2$ are
irrelevant for $\fes$, we conclude that the parameterization by $\ml$ is more
restrictive than that for $\fes+\Delta$. It is also not hard to see that for
all $G$, $\td(G) \le \vi(G) \le \vc(G)+1$. Note also that even though $\vc$ can
be seen as a parameter more restrictive than $\mw$, when a graph has vertex
cover $k$, the best we can say is that its modular width is at most $2^k+k$
\cite{Lampis12}. As a result, the algorithm of \cref{thm:mw} does not imply a
single-exponential FPT algorithm for parameter $\vc$ (but does suffice to show
that the problem is FPT). We also note that the reductions of \cref{thm:td}
(for $\td$) and \cref{thm:UVI:fes} (for $\fes+\Delta$) are complementary and
cannot be subsumed by a single reduction. The reason for this is that if in a
graph we bound simultaneously the treedepth and the maximum degree, then we
actually bound the size of the graph (rendering all problems FPT).

\section{Treedepth}\label{sec:td}

Our main result in this section is the following theorem, resolving a question of~\cite{Gima2023}.
We obtain it via a parameter-preserving reduction from \BDD,
which is known to be W[1]-hard parameterized by treedepth plus feedback vertex set~\cite{GanianKO21}.

\begin{theorem}\label{thm:td}
    \UVI{} is W[1]-hard parameterized by $\td + \fvs$.
    Moreover, it cannot be solved in time $f(\td) n^{o(\td)}$ under the ETH.
\end{theorem}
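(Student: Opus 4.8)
The plan is to reduce from \BDD{} (also known as $d$-\textsc{Bounded-Degree Vertex Deletion}), where one is given a graph $H$, integers $d$ and $b$, and asks whether there is a set of at most $b$ vertices whose removal leaves a graph of maximum degree at most $d$; this problem is W[1]-hard parameterized by $\td(H)+\fvs(H)$ by~\cite{GanianKO21}. The core idea is that in an instance of \UVI{} with budget $k$, an irredundant $\vi(k)$-set $S$ of size exactly $b$ must not only be small but must also shatter the graph into components of size at most $k-b$. So I want to build a gadget graph $G$ from $H$ in which (i) picking few ``expensive-to-leave-intact'' vertices into $S$ forces a choice of a bounded-degree deletion set of $H$, and (ii) the residual component sizes encode exactly the degree constraint.

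Concretely, I would take each vertex $v$ of $H$ and attach to it a pendant ``blob'' (e.g.\ a clique or an independent set of pendant vertices) of carefully chosen size, so that if $v$ is \emph{not} placed in $S$ then the component containing $v$ has size roughly $(\deg_H(v) - (\text{number of neighbors of }v\text{ in }S))$ times a scaling factor, plus the blob. Choosing $k = b + (\text{threshold})$ where the threshold corresponds to ``degree $\le d$ after deletion'', a vertex $v\notin S$ survives only if enough of its neighbors are in $S$, i.e.\ if its residual $H$-degree is at most $d$. Vertices of $H$ that go into $S$ each cost $1$ towards the budget $b$, matching the deletion-set size. One has to be careful that the edges of $H$ do not by themselves merge too many blobs; the standard trick is to subdivide each edge of $H$ once (or use a small connector gadget) and/or make the blobs large enough that the dominating term in a surviving component's size is $\Theta$(residual degree), while the separator cost of a vertex is just $1$. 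Then $\vi(G)\le k$ if and only if $H$ has a $d$-bounded-degree deletion set of size $\le b$.

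The main obstacle will be the bookkeeping that keeps \emph{all three} quantities under control at once: the separator size must equal $b$ (so each ``real'' vertex of $H$ must cost exactly $1$ and no auxiliary/subdivision vertex should ever be worth putting in $S$), the largest residual component must be $\le k-b$ exactly when the degree condition holds (so the blob sizes and the per-edge contribution must be tuned, possibly with a common multiplier, so that ``degree $d$'' and ``degree $d+1$'' straddle the threshold), and simultaneously $\td(G)$ and $\fvs(G)$ must stay bounded by a function of $\td(H)+\fvs(H)$. The last point is why pendant blobs (trees or cliques of bounded treedepth attached at a single vertex) and single subdivisions are the right gadgets: attaching a bounded-treedepth pendant to a vertex raises treedepth by at most an additive constant (or a constant factor), and subdividing edges does not increase $\fvs$ and increases $\td$ by at most a bounded amount, so $\td(G) = \td(H) + \bO(1)$ and $\fvs(G) = \bO(\fvs(H))$; I would verify these relations explicitly as a lemma. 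I should also double-check that \cref{prop:irredundant} lets me assume $S$ is irredundant, which rules out pathological separators that cut through the middle of a blob without separating anything useful.

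For the ETH lower bound, I would observe that \BDD{} cannot be solved in time $f(\td)\,n^{o(\td)}$ under ETH (this follows from the W[1]-hardness construction of~\cite{GanianKO21}, or can be obtained by a reduction from, e.g., \textsc{$k$-Multicolored Clique} / \textsc{Grid Tiling} with the standard quadratic blow-up in the treedepth parameter), and then check that the reduction above is linear in the parameter, i.e.\ $\td(G) = \bO(\td(H))$, so that an $f(\td(G))\,|V(G)|^{o(\td(G))}$ algorithm for \UVI{} would transfer back to a too-fast algorithm for \BDD{}, contradicting ETH. The only care needed here is that the reduction runs in polynomial time and blows up the instance size only polynomially, which the pendant-blob-plus-subdivision construction clearly does as long as the blob sizes are polynomially bounded in $|V(H)|$ and $d$ (which they are, since $d \le |V(H)|$ without loss of generality and the multipliers are polynomial).
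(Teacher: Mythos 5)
You have the right source problem (\BDD{} parameterized by $\td+\fvs$, via~\cite{GanianKO21}) and the right general shape (subdivisions plus pendant gadgets, which indeed keep $\td$ and $\fvs$ under control), but the core mechanism of your reduction does not work as described. You insist that the separator consist of exactly the $b$ vertices of $H$ and that ``no auxiliary/subdivision vertex should ever be worth putting in $S$.'' Under that constraint, two adjacent surviving vertices $u,v\notin S$ remain in the \emph{same} connected component of $G-S$ (a single subdivision of the edge $uv$ does not disconnect them), so the component containing $u$ is the entire blown-up connected component of $H-S$: it contains every surviving vertex of that component together with all their blobs. Its size is therefore governed by the order of the connected components of $H-S$, not by the residual degree of any individual vertex, and no choice of blob sizes or scaling factors fixes this --- making the blobs larger only makes the shared component larger. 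What you would actually be encoding is a component-order condition on $H$, not the degree bound of \BDD.

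The missing idea is to \emph{force} the separator to cut inside every edge gadget, and to use the \emph{position} of that cut to encode the status of the edge. The paper subdivides each edge $\{u,v\}$ three times into $u,u_v,y_e,v_u,v$, attaches $d-1$ leaves to the middle vertex $y_e$, and sets the deletion budget to $k+m$: the heavy middle piece forces at least one deletion per edge gadget, and the budget forces exactly one. If $u\in S$ one deletes $v_u$ (so $v$ is not charged for this edge); if both endpoints survive one must delete $y_e$ itself (otherwise some endpoint's component absorbs $y_e$ and its $d-1$ leaves and exceeds the bound), whereupon $u$ picks up exactly the one vertex $u_v$ as a ``half-edge.'' A surviving vertex's component then has size exactly $1+(\text{residual degree})$, and the threshold $\ell=d+1$ gives the equivalence. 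Without this forced-cut-and-position encoding your accounting collapses. Separately, your ETH claim is under-justified: a ``standard quadratic blow-up'' from \textsc{Multicolored Clique} would only rule out $f(\td)\,n^{o(\sqrt{\td})}$; the tight $f(\td)\,n^{o(\td)}$ lower bound for \BDD{} needs the dedicated result of~\cite{LampisV23}, after which your observation that the reduction must be linear in the parameter is the right one.
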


\begin{proof}
    First we define the closely related \COC{} problem:
    given a graph $G$ as well as integers $\ell$ and $p$,
    we want to determine whether there exists $S \subseteq V(G)$ such that $|S| \leq p$ and
    all components of $G - S$ have size at most $\ell$.
    We will proceed in two steps:
    we first reduce \BDD{} to \COC, and then employ the reduction of~\cite{Gima2023} that reduces the latter to \UVI.
    Notice that~\cite[Lemma~4.4]{Gima2023} creates an equivalent instance of \UVI{} by solely adding disjoint stars and leaves in the vertices of the initial graph, therefore it suffices to prove the statement
    for \COC{} instead.
    
    We give a parameterized reduction from \BDD, which is W[1]-hard by treedepth plus feedback vertex set number~\cite{GanianKO21}
    and cannot be solved in time $f(\td) n^{o(\td)}$ under the ETH~\cite{LampisV23}.
    In \BDD{} we are given a graph $G=(V,E)$ and two integers $k$ and $d$,
    and we are asked to determine whether there exists $S \subseteq V$ of size $|S| \leq k$
    such that the maximum degree of $G-S$ is at most $d$.
    In the following, let $n = |V(G)|$ and $m = |E(G)|$.

    Given an instance $(G, k, d)$ of \BDD, we construct an equivalent instance $(G', \ell, p)$ of \COC.
    We construct $G'$ from $G$ as follows:
    We subdivide every edge $e = \{u,v\} \in E(G)$ three times,
    thus replacing it with a path on vertices $u$, $u_v$, $y_e$, $v_u$, and $v$,
    where $T_e = \{u_v, y_e, v_u\}$.
    Next, we attach $d-1$ leaves to $y_e$ 
    (see~\cref{fig:edge_subdivide}).
    This concludes the construction of $G'$.
    Notice that the subdivision of the edges three times and the attachment of pendant vertices
    does not change the feedback vertex set number, while the treedepth is only increased by an additive constant.
    Thus, it holds that $\fvs(G') = \fvs(G)$ and $\td(G') = \td(G) + \bO(1)$.
    
    \begin{figure}[htbp]
        \centering
        \includegraphics[height=0.2\linewidth]{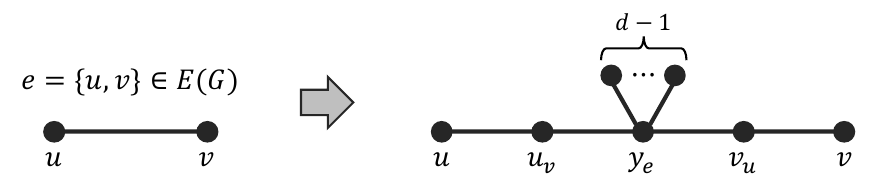}
        \caption{Edge gadget for edge $e = \{u,v\} \in E(G)$.}
        \label{fig:edge_subdivide}
    \end{figure}
    
    In the following, we show that $(G,k,d)$ is a yes-instance of \BDD{} if and only if
    $(G', \ell, p)$ is a yes-instance of \COC, where $\ell = d + 1$ and $p = k + m$.
    
    For the forward direction, let $S$ be a set of vertices of size at most $k$ such that
    the maximum degree of $G-S$ is at most $d$.
    We will construct a set $S' \subseteq V(G')$ such that $|S'| \leq p$ and every connected component
    of $G' - S'$ has size at most $\ell$.
    Initially set $S' = S$.
    Then, add one vertex to $S'$ per edge $e = \{u,v\} \in E(G)$ as follows.
    If $u,v \in S$ or $u,v \notin S$, we add $y_e$ to $S'$.
    Otherwise, if $u \in S$ and $v \notin S$, we add $v_u$ to $S'$;
    symmetrically, if $u \notin S$ and $v \in S$, we add $u_v$ instead (see~\cref{fig:separator}).
    Notice that $|S'| = |S| + m \leq k + m = p$,
    therefore it suffices to show that the size of each connected component of $G'-S'$
    is at most $\ell = d + 1$.
    
    Consider a connected component $D$ of $G'-S'$.
    Assume that $D$ does not contain any vertices of $V \setminus S$.
    If $D$ is a leaf it holds that $|D| \leq d + 1$.
    Alternatively, $D$ is a subgraph of the graph induced by $u_v$ (or $v_u$),
    $y_e$, and its attached leaves,
    for some $e = \{u,v\} \in E(G)$, in which case $|D| \leq d + 1$.
    Now assume that $D$ contains $u \in V\setminus S$.
    Notice that $u$ is the only vertex of $V \setminus S$ present in $D$,
    since $S' \cap T_e \neq \emptyset$ for all $e \in E(G)$.
    Moreover, let $N(u) \setminus S = \setdef{u_i}{i \in [q]}$ denote its neighbors in $G-S$,
    where $q \leq d$ since the maximum degree of $G-S$ is at most $d$.
    In that case, it follows that $D$ consists of $u$,
    as well as the vertices $u_{u_i}$ for all $i \in [q]$.
    Consequently, $|D| = q + 1 \leq d + 1$.

    \begin{figure}[htbp]
        \centering
        \includegraphics[height=0.38\linewidth]{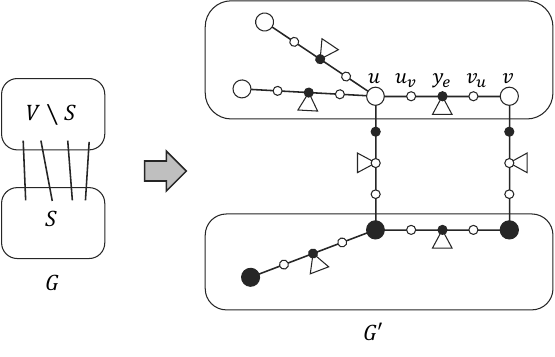}
        \caption{Black vertices belong to $S'$.}
        \label{fig:separator}
    \end{figure}

    For the converse direction, assume there exists $S' \subseteq V(G')$ such that $|S'| \leq p = k + m$
    and $|D| \leq \ell = d + 1$, for all connected components $D \in \cc(G'-S')$.
    Assume that $S'$ does not contain any leaves; if it does, substitute them with their single neighbor.
    Moreover, $S' \cap T_e \neq \emptyset$ for all $e \in E(G)$,
    since otherwise $G' - S'$ has a component of size at least $d + 2 > \ell$,
    which is a contradiction.
    Assume without loss of generality that $|S' \cap T_e| = 1$, for all $e = \{u, v \} \in E(G)$;
    if that is not the case, there is always a vertex of $\braces{u_v, v_u}$,
    say $u_v$, such that $u_v \in S'$ and $S' \cap \braces{y_e, v_u} \neq \emptyset$,
    in which case one may consider the deletion set $(S' \cup \{u\}) \setminus \{u_v\}$ instead
    (the argument is symmetric in case $v_u \in S'$).
    
    Let $S = S' \cap V$, where $|S| \leq k$.
    We will prove that $G-S$ has maximum degree at most $d$.
    Let $D_u$ denote the connected component of $G'-S'$ that contains $u \in V \setminus S'$;
    in fact this is the only vertex of $V \setminus S'$ present in $D_u$,
    since $S' \cap T_e \neq \emptyset$ for all $e \in E(G)$.
    Notice that for all $e = \{u,v\} \in E(G)$ where $u,v \notin S'$,
    it holds that $y_e \in S'$:
    if that were not the case, then either $D_u$ or $D_v$ contains at least $d + 2 > \ell$ vertices,
    due to $\{u, u_v, y_e\}$ or $\{v, v_u, y_e\}$ and the leaves of $y_e$ respectively.
    For $u \in V \setminus S$, let $N(u) \setminus S = \setdef{u_i}{i \in [q]}$, for some integer $q$,
    denote its neighbors in $G-S$,
    where $e_i = \{u,u_i\} \in E(G)$ for $i \in [q]$.
    It suffices to show that $q \leq d$.
    Assume that this is not the case, i.e.~$q > d$.
    Then, since $S' \cap T_{e_i} = \{y_{e_i}\}$ for $i \in [q]$,
    it follows that $D_u$ contains vertices $u$ and $u_{u_i}$,
    therefore $|D_u| \geq q + 1 > d + 1 = \ell$, which is a contradiction.
    Consequently, $|N(u) \setminus S| \leq d$ for all $u \in V \setminus S$,
    i.e.~$G - S$ has maximum degree $d$.
\end{proof}

\section{Feedback Edge Set plus Maximum Degree}\label{sec:fes}

In this section we prove that \VI{} is W[1]-hard parameterized by $\fes+\Delta$.
Since our reduction is significantly more involved than the one of~\cref{thm:td},
we proceed in several steps. We start from an instance of \UBP{} where the parameter
is the number of bins and consider a variant where we are also supplied in the input,
for each item, a choice of two possible bins to place it. We first observe that the
reduction of~\cite{JansenKMS13} shows that this variant is also W[1]-hard. We then
reduce this to a \emph{semi-weighted} version of \VI, where placing a vertex in the
separator always costs $1$, but vertices have weights which they contribute to their
components if they are not part of the separator, and where we are prescribed the size
of the separator to use (this is called the \COC{} problem).
Subsequently, we show how to remove the weights and the prescription on the separator size to obtain
hardness for \VI.

\subsection{Preliminary Tools}\label{subsec:fes_tools}

\subparagraph*{Unary Bin Packing.} 
Given a set $S = \braces{s_1, \ldots, s_n}$ of integers in unary (i.e.~$s_i = \bO(n^c)$ for some constant $c$),
as well as $k \in \mathbb{Z}^+$, \UBP{} asks whether we can partition $S$ into $k$ subsets $S_1, \ldots, S_k$,
such that $\Sigma(S_i) = \Sigma(S) / k$, for all $i \in [k]$.
This problem is well known to be W[1]-hard parameterized by the number of bins $k$~\cite{JansenKMS13}.
We formally define a restricted version where every item is allowed to choose between \emph{exactly two} bins,
and by delving deeper into the proof of~\cite{JansenKMS13} we observe that an analogous hardness result follows.

\problemdef{\RUBP}
{A set $S = \braces{s_1, \ldots, s_n}$ of integers in unary, $k \in \mathbb{Z}^+$,
as well as a function $f : S \to \binom{[k]}{2}$.}
{Determine whether we can partition $S$ into $k$ subsets $S_1, \ldots, S_k$,
such that for all $i \in [k]$ it holds that
(i) $\Sigma(S_i) = \Sigma(S) / k$,
and (ii) $\forall s \in S_i$, $i \in f(s)$.}

\begin{theoremrep}
    \RUBP{} is W[1]-hard parameterized by the number of bins.
\end{theoremrep}

\begin{proof}
    The W[1]-hardness of \UBP{} parameterized by the number of bins is shown via an intermediate problem,
    called \textsc{$10$-Unary Vector Bin Packing}~\cite{JansenKMS13}.    
    In said problem, we are given $n$ items $\mathcal{S} = \braces{\mathbf{s}_1, \ldots, \mathbf{s}_n}$,
    where every item is a 10-dimensional vector belonging to $\N^{10}$ encoded in unary.
    Additionally, we are given $k$ ``bin'' vectors $\mathcal{B} = \braces{\mathbf{B}_1, \ldots, \mathbf{B}_k}$,
    and the question is whether we can partition the items into $k$ sets $J_1, \ldots, J_k$ such that
    $\sum_{\mathbf{s} \in J_i} \mathbf{s} \leq \mathbf{B}_i$, for all $i \in [k]$.

    This problem is known to be W[1]-hard parameterized by the number of bins,
    via an fpt-reduction from \textsc{Subgraph Isomorphism} parameterized by the number of edges of the sought subgraph.
    Fix $1 \leq i < j \leq k$ and notice that for the intended solution of
    said reduction (\cite[Lemma~10]{JansenKMS13}) it holds that:
    \begin{itemize}
        \item every item $\mathbf{s}_{i,j}(e)$ is either placed in bin $\mathbf{P}_{i,j}$ or bin $\mathbf{R}$,
        \item every item $\mathbf{t}_{i,i}(v)$ is either placed in bin $\mathbf{Q}_{i}$ or bin $\mathbf{R}$,
        \item every item $\mathbf{t}_{i,j}(v)$ and $\mathbf{t}_{j,i}(v)$ is either placed in bin $\mathbf{P}_{i,j}$ or bin $\mathbf{Q}_i$.
    \end{itemize}
    Consequently, the hardness result holds in the case when every item is allowed to be placed in one amongst two specified bins, called the \textsc{$10$-Unary Restricted Vector Bin Packing} problem.

    As a second step, the authors reduce \textsc{$10$-Unary Vector Bin Packing} to
    \UBP{} (\cite[Lemma~6]{JansenKMS13}).
    Assume that $(\mathcal{S}, \mathcal{B}, f)$
    denotes the initial instance of \textsc{$10$-Unary Restricted Vector Bin Packing},
    where $f(\mathbf{s}) \in \binom{\mathcal{B}}{2}$ denotes the bins an item $\mathbf{s} \in \mathcal{S}$ may be placed into.
    To this end, the authors introduce a bin $B_i$ per bin $\mathbf{B}_i$, an item $s_i$ per item $\mathbf{s}_i$,
    as well as $k$ additional items $t_1, \ldots, t_k$,
    which are used to encode the capacity of bins $\mathbf{B}_i$ and are
    sufficiently large to guarantee that no two of them are placed in the same bin.
    It suffices to slightly modify said reduction, in order to show hardness for \RUBP.
    In particular, we introduce an additional copy $s'_i$ per item $s_i$, an additional copy $t'_i$ per item $t_i$,
    as well as an additional copy $B'_i$ per bin $B_i$.
    Next, if $f(\mathbf{s}) = \braces{\mathbf{B}_i, \mathbf{B}_j}$,
    we set $f'(s) = \braces{B_i, B_j}$ and $f'(s') = \braces{B'_i, B'_j}$.
    Moreover, we set $f'(t_i) = f'(t'_i) = \braces{B_i, B'_i}$.
    Since all items $t_i$ and $t'_i$ are placed in distinct bins, the correctness follows as in the proof of~\cite{JansenKMS13}.
\end{proof}

\subparagraph*{Semi-weighted problems.} 
In this section we study semi-weighted versions of \COC{} and \VI,
which we first formally define.
Then, we prove that the first can be reduced to the latter,
while retaining the size of the minimum feedback edge set and the maximum degree.

\problemdef{\SWCOC}
{A vertex-weighted graph $G = (V, E, w)$, as well as integers $\ell, p \in \mathbb{Z}^+$.}
{Determine whether there exists $S \subseteq V$ of size $|S| \leq p$,
such that $w(D) \leq \ell$ for all $D \in \cc(G-S)$.}

\problemdef{\SWVI}
{A vertex-weighted graph $G = (V, E, w)$, as well as an integer $\ell \in \mathbb{Z}^+$.}
{Determine whether there exists $S \subseteq V$ such that
$|S| + w(D) \leq \ell$ for all $D \in \cc(G-S)$.}

\begin{theoremrep}\label{thm:swcoc_to_swvi}
    \SWCOC{} parameterized by $\fes + \Delta$ is fpt-reducible to \SWVI{} parameterized by $\fes + \Delta$.
\end{theoremrep}

\begin{proof}
    We will closely follow the proof of~\cite[Lemma~4.4]{Gima2023}.
    Let $(G, w, \ell, p)$ be an instance of \SWCOC.
    Assume without loss of generality that for every vertex $v \in V(G)$ it holds that $w(v) \leq \ell$,
    since otherwise it necessarily belongs to the deletion set.
    We will construct an equivalent instance $(G', w', k)$ of \SWVI,
    where $k = \ell p + \ell + p$.
    Construct $G'$ in the following way:
    Make a copy of $G$, where $w'(v) = w(v)$, for all $v \in V(G)$.
    Then, attach to every vertex $v$ a leaf $l_v$, and set $w'(l_v) = p \cdot w(v)$.
    Finally, introduce an independent set $I = \setdef{v_i}{i \in [k + 1]}$,
    with every vertex of which having weight $w'(v_i) = k - p = \ell p + \ell$.
    This concludes the construction of the instance.
    
    For the forward direction, assume there exists $S \subseteq V(G)$ such that $|S| \leq p$ and $w(D) \leq \ell$,
    for all connected components $D$ of $G-S$.
    It suffices to prove that $w(D') \leq k - p = \ell p + \ell$, for all connected components $D'$ of $G'-S$.
    If $D' \cap V(G) = \emptyset$, then $D'$ contains
    (i) either a single vertex $l_v$ of weight $w'(l_v) = p \cdot w(v) \leq p \ell$,
    (ii) or a single vertex $v_i$ of weight $w'(v_i) = \ell p + \ell$,
    and the statement holds.
    Alternatively, $D' \cap V(G)$ induces a connected component $D$ of $G-S$,
    therefore $w(D') \leq w(D) + p \cdot w(D) = \ell + \ell p$.
    
    For the converse direction, assume there exists $S' \subseteq V(G')$ such that
    $|S'| + w'(D') \leq k$, for all connected components $D'$ of $G'-S'$.
    Assume without loss of generality that $S'$ does not contain any vertex of degree $1$;
    if that is the case, substitute it with its single neighbor and this set remains a valid solution.
    Notice that $|I| = k+1 > |S'|$, and let $v_i \in I$ belong to $G'-S'$.
    In that case, it follows that $\max_{D' \in \cc(G'-S')} w'(D') \geq w'(v_i) = k-p$,
    where $\cc(G'-S')$ denotes the connected components of $G'-S'$.
    Consequently, it follows that $|S'| \leq p$.
    Let $S = S' \cap V(G)$ and $D$ be an arbitrary connected component of $G-S$.
    Since $|S| \leq |S'| \leq p$, it suffices to prove that $w(D) \leq \ell$.
    Let $D'$ be the connected component of $G'-S'$ such that $D \subseteq D'$.
    Since $S'$ does not contain any vertices of degree 1,
    it holds that for each vertex of $D$, $D'$ contains the corresponding leaf attached to it,
    and thus, $w(D') = w(D) + p \cdot w(D) = (p+1) w(D)$.
    Since $w(D') \leq k = \ell p + \ell + p < (p+1) (\ell+1)$,
    it follows that $w(D) < \ell+1$.

    Finally, to conclude the proof, notice that $\fes(G') = \fes(G)$ as well as $\Delta(G') \leq \Delta(G) + 1$,
    since the only vertices added are one leaf per vertex and vertices of degree 0.
\end{proof}

\subsection{Hardness Result}\label{subsec:rubp_to_swcoc}

Using the results of~\cref{subsec:fes_tools}, we proceed to proving the main theorem of this section.
To this end, we present a reduction from \RUBP{} to \SWCOC{} such that for the produced graph $G$ it holds that $\fes(G) + \Delta(G) \leq f(k)$,
for some function $f$ and $k$ denoting the number of bins of the \RUBP{} instance.

We first provide a sketch of our reduction.
For every bin of the \RUBP{} instance, we introduce a clique of $\bO(k)$ heavy vertices,
and then connect any pair of such cliques via two paths.
The weights are set in such a way that an optimal solution will only delete vertices from said paths.
In order to construct a path for a pair of bins, we compute the set of all subset sums of the items that can be placed in these two
bins, and introduce a vertex of medium weight per such subset sum.
Moreover, every such vertex corresponding to subset sum $s$ is preceded by exactly $s$ vertices of weight $1$.
An optimal solution will cut the path in such a way that the number of vertices of weight $1$ will be partitioned between the two bins,
encoding the subset sum of the elements that are placed on each bin.
The second path that we introduce has balancing purposes, allowing us to exactly count the number of vertices of medium weight that every
connected component will end up with.

\begin{theoremrep}\label{thm:SWCOC:fes}
    \SWCOC{} is W[1]-hard parameterized by $\fes + \Delta$.
\end{theoremrep}

\begin{proof}
    Let $(A, k, f)$ be an instance of \RUBP,
    where $A = \braces{a_1, \ldots, a_n}$ denotes the set of items given in unary,
    $k$ is the number of bins, and
    $f : A \to \binom{[k]}{2}$ dictates the bins an item may be placed into.
    In the following, consider $B = \Sigma(A) / k$,
    as well as $M = kB+1$ and $L = 8k^2 BM$.
    Notice that $M > kB$,
    while $L / (4k) \in \N$ and $L/(4k) > (k-1) \cdot 2B M + B$.
    We will reduce $(A,k,f)$ to an equivalent instance $(G,w,\ell,p)$ of \SWCOC,
    where $\ell = L + (k-1) \cdot 2B M + B$ and $p = 3 \binom{k}{2}$ denote the maximum component weight and size of the deletion set respectively.
    
    For every $i \in [k]$, we introduce a clique on vertex set $\hat{C}_i$,
    which is comprised of $4k$ vertices, each of weight $L / (4k)$.
    Fix $i$ and $j$ such that $1 \leq i < j \leq k$, and let $H_{i,j} = \setdef{a \in A}{f(a) = \braces{i,j}}$ denote the subset of items
    which can be placed either on bin $i$ or bin $j$, where $\Sigma(H_{i,j}) \leq 2B$.
    Let $\mathcal{S}(H_{i,j}) = \setdef{\Sigma(H)}{H \subseteq H_{i,j}}$ denote the set of all subset sums of $H_{i,j}$,
    and notice that since every element of $H_{i,j}$ is in unary,
    $\mathcal{S}(H_{i,j})$ can be computed in polynomial time using e.g.~Bellman's classical DP algorithm~\cite{Bellman}.

    Next, we will construct two paths connecting the vertices of $\hat{C}_i$ and $\hat{C}_j$.
    First, introduce vertex set $\hat{U}_{i,j} = \setdef{v^{i,j}_q}{q \in [0, 4B - |\mathcal{S}(H_{i,j})| +1]}$,
    where $w(v) = M$ for all $v \in \hat{U}_{i,j}$.
    Add edges $(v^{i,j}_q, v^{i,j}_{q+1})$ for all $q \in [0, 4B - |\mathcal{S}(H_{i,j})|]$,
    as well as $(v_1, v^{i,j}_0)$ and $(v^{i,j}_{4B - |\mathcal{S}(H_{i,j})| +1}, v_2)$,
    for all $v_1 \in \hat{C}_i$ and $v_2 \in \hat{C}_j$.
    Next, introduce vertices
    \begin{itemize}
        \item $\setdef{s^{i,j}_q, t^{i,j}_q}{q \in [\Sigma(H_{i,j})]}$,
        each of weight $1$,
        
        \item $\setdef{\sigma^{i,j}_q, \tau^{i,j}_q}{q \in \mathcal{S}(H_{i,j})}$,
        each of weight $M$,
        
        \item $\hat{D}^3_{i,j} = \braces{b^{i,j}_1, b^{i,j}_2, b^{i,j}_3}$,
        where $w(b^{i,j}_1) = w(b^{i,j}_3) = L/2$,
        and $w(b^{i,j}_2) = (k-1) \cdot 2BM + B - \Sigma(H_{i,j}) - (|\mathcal{S}(H_{i,j})| - 1) \cdot M$,
    \end{itemize}
    and set $\hat{D}_{i,j} = \hat{D}^1_{i,j} \cup \hat{D}^2_{i,j} \cup \hat{D}^3_{i,j}$,
    where $\hat{D}^1_{i,j} = \setdef{s^{i,j}_q}{q \in [\Sigma(H_{i,j})]} \cup \setdef{\sigma^{i,j}_q}{q \in \mathcal{S}(H_{i,j})}$
    and $\hat{D}^2_{i,j} = \setdef{t^{i,j}_q}{q \in [\Sigma(H_{i,j})]} \cup \setdef{\tau^{i,j}_q}{q \in \mathcal{S}(H_{i,j})}$.
    Then, add the following edges:
    \begin{itemize}
        \item $(v_1, \sigma^{i,j}_0)$ and $(\tau^{i,j}_{\Sigma(H_{i,j})}, v_2)$, for all $v_1 \in \hat{C}_i$ and $v_2 \in \hat{C}_j$,
        \item $(\sigma^{i,j}_{\Sigma(H_{i,j})}, b^{i,j}_1)$, $(b^{i,j}_1, b^{i,j}_2)$, $(b^{i,j}_2, b^{i,j}_3)$ and $(b^{i,j}_3, \tau^{i,j}_0)$,
        \item for $q \in \mathcal{S}(H_{i,j})$,
        add edges $(s^{i,j}_q, \sigma^{i,j}_q)$ and $(t^{i,j}_q, \tau^{i,j}_q)$ if $q \neq 0$,
        and edges $(\sigma^{i,j}_q, s^{i,j}_{q+1})$ and $(\tau^{i,j}_q, t^{i,j}_{q+1})$ if $q \neq \Sigma(H_{i,j})$,
        \item for $q \in [\Sigma(H_{i,j})] \setminus \mathcal{S}(H_{i,j})$,
        add edges $(s^{i,j}_q, s^{i,j}_{q+1})$ and $(t^{i,j}_q, t^{i,j}_{q+1})$.
    \end{itemize}

    This concludes the construction of $G$.
    See~\cref{fig:fes_construction} for an illustration.

    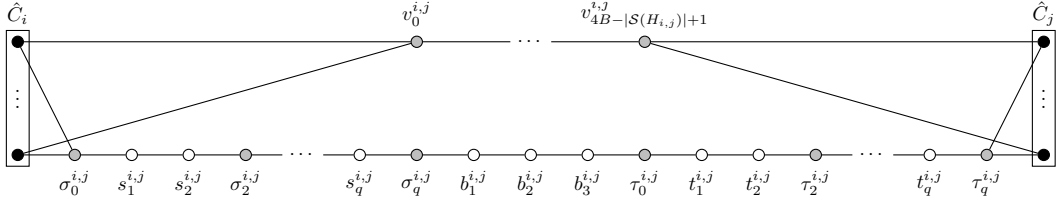
\begin{figure}[ht]
\centering

\begin{tikzpicture}[scale=0.75, transform shape]

\node[black_vertex] (c11) at (5,5) {};
\node[] () at (5,6.1) {$\vdots$};
\node[black_vertex] (c1n) at (5,7) {};
\draw[] (4.8,4.8) rectangle (5.2,7.2);
\node[] () at (5,7.5) {$\hat{C}_i$};

\begin{scope}[shift={(18,0)}]
    \node[black_vertex] (c21) at (5,5) {};
    \node[] () at (5,6.1) {$\vdots$};
    \node[black_vertex] (c2n) at (5,7) {};
    \draw[] (4.8,4.8) rectangle (5.2,7.2);
    \node[] () at (5,7.5) {$\hat{C}_j$};
\end{scope}

\begin{scope}[shift={(7,-3)}]
    \node[gray_vertex] (p0) at (5,10) {};
    \node[] () at (5,10.5) {$v^{i,j}_0$};
    \node[] (temp) at (7,10) {$\cdots$};
    \node[gray_vertex] (pB) at (9,10) {};
    \node[] () at (9,10.5) {$v^{i,j}_{4B - |\mathcal{S}(H_{i,j})| + 1}$};
    \draw[] (p0)--(temp)--(pB);
    \draw[] (p0)--(c11);
    \draw[] (p0)--(c1n);
    \draw[] (pB)--(c21);
    \draw[] (pB)--(c2n);
\end{scope}

\begin{scope}[shift={(1,0)}]
    \node[gray_vertex] (sigma0) at (5,5) {};
    \node[] () at (5,4.5) {$\sigma^{i,j}_0$};
    \node[vertex] (s1) at (6,5) {};
    \node[] () at (6,4.5) {$s^{i,j}_1$};
    \node[vertex] (s2) at (7,5) {};
    \node[] () at (7,4.5) {$s^{i,j}_2$};
    \node[gray_vertex] (sigma2) at (8,5) {};
    \node[] () at (8,4.5) {$\sigma^{i,j}_2$};
    \node[] (temp2) at (9,5) {$\cdots$};
    \node[vertex] (sB) at (10,5) {};
    \node[] () at (10,4.5) {$s^{i,j}_{q}$};
    \node[gray_vertex] (sigmaB) at (11,5) {};
    \node[] () at (11,4.5) {$\sigma^{i,j}_{q}$};
    \node[vertex] (big1) at (12,5) {};
    \node[] () at (12,4.5) {$b^{i,j}_{1}$};
    \node[vertex] (big2) at (13,5) {};
    \node[] () at (13,4.5) {$b^{i,j}_{2}$};
    \node[vertex] (big3) at (14,5) {};
    \node[] () at (14,4.5) {$b^{i,j}_{3}$};
    \node[gray_vertex] (tau0) at (15,5) {};
    \node[] () at (15,4.5) {$\tau^{i,j}_{0}$};
    \node[vertex] (t1) at (16,5) {};
    \node[] () at (16,4.5) {$t^{i,j}_{1}$};
    \node[vertex] (t2) at (17,5) {};
    \node[] () at (17,4.5) {$t^{i,j}_{2}$};
    \node[gray_vertex] (tau2) at (18,5) {};
    \node[] () at (18,4.5) {$\tau^{i,j}_{2}$};
    \node[] (temp3) at (19,5) {$\cdots$};
    \node[vertex] (tB) at (20,5) {};
    \node[] () at (20,4.5) {$t^{i,j}_{q}$};
    \node[gray_vertex] (tauB) at (21,5) {};
    \node[] () at (21,4.5) {$\tau^{i,j}_{q}$};
    
    \draw[] (sigma0)--(s1)--(s2)--(sigma2)--(temp2)--(sB)--(sigmaB)--(big1)--(big2)--(big3)--(tau0)--(t1)--(t2)--(tau2)--(temp3)--(tB)--(tauB);
    \draw[] (sigma0)--(c11);
    \draw[] (sigma0)--(c1n);
    \draw[] (tauB)--(c21);
    \draw[] (tauB)--(c2n);
\end{scope}

\end{tikzpicture}

\caption{Rectangles denote cliques of size $4k$.
Here we assume that $1 \leq i < j \leq k$, $2 \in \mathcal{S}(H_{i,j})$,
while $q = \Sigma(H_{i,j})$.
It holds that $w(b^{i,j}_1) = w(b^{i,j}_3) = L/2$, while $w(b^{i,j}_2) = (k-1) \cdot 2BM + B - \Sigma(H_{i,j}) - (|\mathcal{S}(H_{i,j})| - 1) \cdot M$.
For the rest of the vertices, the white, gray,
or black color indicates weight of $1$, $M$, or $L / (4k)$ respectively.}
\label{fig:fes_construction}

\end{figure}

    \begin{lemma}\label{lemma:fes_forward_direction}
        If $(A,k,f)$ is a Yes-instance of \RUBP, then $(G,w,\ell,p)$ is a Yes-instance of \SWCOC.
    \end{lemma}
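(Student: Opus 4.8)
The plan is to convert a valid assignment $S_1,\dots,S_k$ of the \RUBP{} instance (so $\Sigma(S_i)=B$ for every $i$ and $s\in S_i$ implies $i\in f(s)$) into a deletion set $S\subseteq V(G)$ of size exactly $p=3\binom{k}{2}$ such that every component of $G-S$ has weight exactly $\ell$. For each pair $i<j$, the set $S$ must cut both $\hat C_i$--$\hat C_j$ paths, and I will cut them according to how the items of $H_{i,j}$ split between bins $i$ and $j$. Concretely, let $x_{i,j}:=\Sigma(S_i\cap H_{i,j})$, which lies in $\mathcal S(H_{i,j})$ and satisfies $\Sigma(S_j\cap H_{i,j})=\Sigma(H_{i,j})-x_{i,j}$ (using $\Sigma(H_{i,j})\le 2B$). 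Put into $S$ the three vertices $\sigma^{i,j}_{x_{i,j}}$, $\tau^{i,j}_{x_{i,j}}$ and $v^{i,j}_{r_{i,j}}$, where $r_{i,j}:=2B-|\mathcal S(H_{i,j})\cap[0,x_{i,j})|$. The first thing to verify is that $r_{i,j}$ is a legal index, i.e.\ $0\le r_{i,j}\le 4B-|\mathcal S(H_{i,j})|+1$, which follows from $|\mathcal S(H_{i,j})\cap[0,x_{i,j})|\le x_{i,j}\le 2B$ and $|\mathcal S(H_{i,j})|\le 2B+1$. These vertices are pairwise distinct and distinct across pairs, so $|S|=3\binom{k}{2}=p$.

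Next I would describe $G-S$ explicitly. Deleting $v^{i,j}_{r_{i,j}}$ breaks the first path into a prefix of $r_{i,j}$ weight-$M$ vertices attached to $\hat C_i$ and a suffix attached to $\hat C_j$; deleting $\sigma^{i,j}_{x_{i,j}}$ and $\tau^{i,j}_{x_{i,j}}$ breaks the second path into a piece attached to $\hat C_i$ (the vertices $\sigma^{i,j}_0,s^{i,j}_1,\dots,s^{i,j}_{x_{i,j}}$ together with the interior $\sigma^{i,j}_q$, $q\in\mathcal S(H_{i,j})$, $q<x_{i,j}$), a symmetric piece attached to $\hat C_j$, and a middle piece $M_{i,j}$ containing $b^{i,j}_1,b^{i,j}_2,b^{i,j}_3$, exactly $\Sigma(H_{i,j})$ weight-$1$ vertices, and exactly $|\mathcal S(H_{i,j})|-1$ weight-$M$ vertices. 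Since a clique meets only the endpoints of these paths and every $\hat C_i$--$\hat C_j$ route has been severed, the components of $G-S$ are precisely one component $C_i$ per clique and one component $M_{i,j}$ per pair. For $C_i$: from the pair $\{i,j\}$ it collects $r_{i,j}+|\mathcal S(H_{i,j})\cap[0,x_{i,j})|=2B$ weight-$M$ vertices and $\Sigma(S_i\cap H_{\{i,j\}})$ weight-$1$ vertices (the number of weight-$M$ vertices is the same whether $i$ is the smaller or the larger index of the pair), so $w(C_i)=L+(k-1)2BM+\sum_{j\ne i}\Sigma(S_i\cap H_{\{i,j\}})=L+(k-1)2BM+B=\ell$, using $\Sigma(S_i)=B$. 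For $M_{i,j}$: its weight is $\Sigma(H_{i,j})+(|\mathcal S(H_{i,j})|-1)M+L/2+L/2+w(b^{i,j}_2)$, which by the definition of $w(b^{i,j}_2)$ equals exactly $L+(k-1)2BM+B=\ell$. Hence $S$ witnesses that $(G,w,\ell,p)$ is a Yes-instance of \SWCOC.

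The genuinely load-bearing point in the computation is the identity $|\mathcal S(H_{i,j})\cap[0,x_{i,j})|+|\mathcal S(H_{i,j})\cap(x_{i,j},\Sigma(H_{i,j})]|=|\mathcal S(H_{i,j})|-1$, which holds precisely because $x_{i,j}\in\mathcal S(H_{i,j})$; this is what forces the $\hat C_i$-side and the $\hat C_j$-side of each pair to pick up the same number $2B$ of weight-$M$ vertices regardless of where the cut lands, and it is exactly the role of the auxiliary first path. The fiddliest step I anticipate is the degenerate cases $x_{i,j}=0$, $x_{i,j}=\Sigma(H_{i,j})$, and $H_{i,j}=\emptyset$: there some deleted vertices are path endpoints adjacent to a clique or to $b^{i,j}_1$ or $b^{i,j}_3$, so one must confirm that $M_{i,j}$ still comes out connected (degenerating, when $H_{i,j}=\emptyset$, to just $\{b^{i,j}_1,b^{i,j}_2,b^{i,j}_3\}$) and that the cliques remain pairwise separated. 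Everything else follows mechanically, since the construction was designed around exactly this accounting.
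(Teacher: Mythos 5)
Your proposal is correct and follows essentially the same route as the paper's proof: you delete the same three vertices per pair (your $\sigma^{i,j}_{x_{i,j}},\tau^{i,j}_{x_{i,j}},v^{i,j}_{2B-|\mathcal S(H_{i,j})\cap[0,x_{i,j})|}$ coincide with the paper's choice up to renaming of $r_{i,j}$), and the weight accounting for the clique components and the middle path components is identical, with the same key identity forcing each side to receive exactly $2B$ weight-$M$ vertices. The only additions are your explicit check that the cut index on $\hat U_{i,j}$ is in range and the remark about degenerate cases, both of which are fine.
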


    \begin{nestedproof}
        Let $(\mathcal{A}_1, \ldots, \mathcal{A}_k)$ be a partition of $A$ such that for all $i \in [k]$ it holds that
        (i) $\Sigma(\mathcal{A}_i) = B$, and (ii) $\forall a \in \mathcal{A}_i$, $i \in f(a)$.
        Fix $1 \leq i < j \leq k$, and let $\mathcal{H}_{i,j} \subseteq H_{i,j}$ such that $\mathcal{A}_i \cap H_{i,j} = \mathcal{H}_{i,j}$,
        while $\mathcal{A}_j \cap H_{i,j} = H_{i,j} \setminus \mathcal{H}_{i,j}$.
        Note that for all $i \in [k]$, it holds that
        \begin{equation}\label{eq:fes_forward_direction}
            \sum_{j \in [i-1]} \big( \Sigma(H_{j,i}) - \Sigma(\mathcal{H}_{j,i}) \big) + \sum_{j \in [i+1,k]} \Sigma(\mathcal{H}_{i,j}) = B.
        \end{equation}
        Additionally, let $r_{i,j} = |\mathcal{S}(H_{i,j}) \cap [0,\Sigma(\mathcal{H}_{i,j})-1]|$ denote the number of subset sums of $H_{i,j}$
        of sum at most $\Sigma(\mathcal{H}_{i,j})-1$.
        Set $S = \setdef{\sigma^{i,j}_{\Sigma(\mathcal{H}_{i,j})}, \tau^{i,j}_{\Sigma(\mathcal{H}_{i,j})}, v^{i,j}_{2B - r_{i,j}}}{1 \leq i < j \leq k}$.
        It holds that $|S| = 3 \binom{k}{2} = p$.
        In the following, we will prove that every connected component of $G-S$ has weight at most $\ell$.

        Notice that the vertices of $\hat{C}_i$ and $\hat{C}_j$ are in different connected components of $G-S$,
        and let $\hat{\mathcal{C}}_i$ denote the connected component of $G-S$ that contains the vertices of $\hat{C}_i$, for all $i \in [k]$.
        Additionally, for every $1 \leq i < j \leq k$, let $\hat{\mathcal{P}}_{i,j}$ denote the connected component of $G-S$ that contains
        the vertices of $\hat{D}^3_{i,j}$.

        Fix $1 \leq i < j \leq k$.
        Notice that $S \cap \hat{D}_{i,j} = \braces{\sigma^{i,j}_{\Sigma(\mathcal{H}_{i,j})}, \tau^{i,j}_{\Sigma(\mathcal{H}_{i,j})}}$.
        Consequently, it holds that $\hat{\mathcal{P}}_{i,j}$ contains $\Sigma(H_{i,j})$ vertices of weight $1$,
        $|\mathcal{S}(H_{i,j})| - 1$ vertices of weight $M$, as well as vertices $b^{i,j}_1$, $b^{i,j}_2$, and $b^{i,j}_3$.
        Therefore, it follows that $w(\hat{\mathcal{P}}_{i,j}) = \Sigma(H_{i,j}) + (|\mathcal{S}(H_{i,j})|-1) \cdot M + L/2 + L/2 + (k-1) \cdot 2BM + B - \Sigma(H_{i,j}) - (|\mathcal{S}(H_{i,j})|-1) \cdot M$,
        thus $w(\hat{\mathcal{P}}_{i,j}) = L + (k-1) \cdot 2BM + B = \ell$.

        Additionally, it holds that $\hat{\mathcal{C}}_i$ contains $\Sigma(\mathcal{H}_{i,j})$ vertices of weight $1$ belonging to $\hat{D}_{i,j}$,
        as well as $r_{i,j} + 2B - r_{i,j} = 2B$ vertices of weight $M$ from $\hat{D}_{i,j} \cup \hat{U}_{i,j}$.
        As for $\hat{\mathcal{C}}_j$, it contains $\Sigma(H_{i,j}) - \Sigma(\mathcal{H}_{i,j})$ vertices of weight $1$ belonging to $\hat{D}_{i,j}$,
        as well as $|\mathcal{S}(H_{i,j})| - 1 - r_{i,j} + 4B - |\mathcal{S}(H_{i,j})| + 1 - (2B - r_{i,j}) = 2B$ vertices of weight $M$ from $\hat{D}_{i,j} \cup \hat{U}_{i,j}$.
        For any fixed $i \in [k]$, it follows that
        \begin{align*}
            w(\hat{\mathcal{C}}_i) &= L +
            \sum_{j \in [i-1]} \big( \Sigma(H_{i,j}) - \Sigma(\mathcal{H}_{i,j}) + 2BM \big) +
            \sum_{j \in [i+1,k]} \big( \Sigma(\mathcal{H}_{i,j}) + 2BM \big)\\
            &= L + (k-1) \cdot 2BM + B\\
            &= \ell,
        \end{align*}
        where the second equality is due to~\cref{eq:fes_forward_direction}.
        This concludes the proof.
    \end{nestedproof}

    \begin{lemma}\label{lemma:fes_opposite_direction}
        If $(G,w,\ell,p)$ is a Yes-instance of \SWCOC, then $(A,k,f)$ is a Yes-instance of \RUBP.
    \end{lemma}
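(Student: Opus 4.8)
The plan is to start from an arbitrary solution $S$ of the produced \SWCOC{} instance and decode from it a partition of $A$ certifying a Yes-instance of \RUBP; throughout, the three weight scales $L$, $M$ and $1$ are kept apart using $L/(4k) > (k-1)\cdot 2BM + B$ and $M > kB \geq \Sigma(H_{i,j})$. \textbf{Step 1 (the separator is rigid).} Since each clique $\hat{C}_i$ has total weight $L$ and $\ell = L + (k-1)\cdot 2BM + B < 2L$, no component of $G - S$ contains two cliques, so $S$ cuts the two internally vertex-disjoint $\hat{C}_i$--$\hat{C}_j$ paths for every $i < j$. A short budget argument --- disconnecting a clique of size $4k$ is far too expensive given $|S| \leq p = 3\binom{k}{2}$ --- lets me assume $S$ contains no clique vertex, so every vertex of $S$ is path-internal and belongs to exactly one pair. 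I would then use the weight-$L/2$ vertices $b^{i,j}_1, b^{i,j}_3$: if for some pair the $\hat{D}$-path were cut at most once, or if one of $b^{i,j}_1, b^{i,j}_2, b^{i,j}_3$ lay in $S$, then one of $b^{i,j}_1, b^{i,j}_3$ would stay attached to a clique, creating a component of weight at least $L + L/2 > \ell$. Hence for each pair the $\hat{D}$-path is cut at least twice --- once inside $\hat{D}^1_{i,j}$ and once inside $\hat{D}^2_{i,j}$ (otherwise $b^{i,j}_1$ or $b^{i,j}_3$ again joins a clique) --- plus at least once on $\hat{U}_{i,j}$, so $|S| \geq 3\binom{k}{2} = p$ and all these inequalities are equalities. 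Thus, for every $i<j$, $S$ contains exactly one vertex of each of $\hat{U}_{i,j}$, $\hat{D}^1_{i,j}$, $\hat{D}^2_{i,j}$; the vertices $b^{i,j}_1,b^{i,j}_2,b^{i,j}_3$ together with the path-fragments between the $\hat{D}^1$- and $\hat{D}^2$-cuts form a single component $\hat{\mathcal{P}}_{i,j}$ that touches no clique; and $G - S$ has exactly $k + \binom{k}{2}$ components, namely the clique components $\hat{\mathcal{C}}_1,\dots,\hat{\mathcal{C}}_k$ and the $\hat{\mathcal{P}}_{i,j}$ (see \cref{fig:fes_construction}).

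\textbf{Step 2 (a global weight count makes every component tight).} I would use the identity $w(V(G)) = \bigl(k + \binom{k}{2}\bigr)\ell + 3\binom{k}{2}M$, which is visible already in the intended solution of \cref{lemma:fes_forward_direction} (there all $k+\binom{k}{2}$ components have weight exactly $\ell$, and the $3\binom{k}{2}$ separator vertices all have weight $M$). By Step 1, the present $S$ consists only of $v$-, $\sigma$-, $\tau$-, $s$- and $t$-vertices, each of weight $M$ or $1$, so $w(S) \leq 3\binom{k}{2}M$; on the other hand $\sum_{D\in\cc(G-S)} w(D) = w(V(G)) - w(S) \leq \bigl(k + \binom{k}{2}\bigr)\ell$, since there are $k+\binom{k}{2}$ components each of weight at most $\ell$. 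These three facts force $w(S) = 3\binom{k}{2}M$ --- hence the $\hat{D}^1$-cut of pair $\{i,j\}$ is a vertex $\sigma^{i,j}_{c_{i,j}}$ and the $\hat{D}^2$-cut a vertex $\tau^{i,j}_{c'_{i,j}}$ --- and every component of $G-S$ has weight exactly $\ell$.

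\textbf{Step 3 (decoding the partition).} Expressing $w(\hat{\mathcal{P}}_{i,j})$ in terms of $c_{i,j}$, $c'_{i,j}$ and $w(b^{i,j}_1)+w(b^{i,j}_2)+w(b^{i,j}_3)$, the equation $w(\hat{\mathcal{P}}_{i,j}) = \ell$ separates into an $M$-part and a $1$-part (valid since $|c_{i,j}-c'_{i,j}| \leq \Sigma(H_{i,j}) < M$) and forces $c_{i,j} = c'_{i,j}$; being the index of a $\sigma$-vertex, this common value lies in $\mathcal{S}(H_{i,j})$, so I fix $\mathcal{H}_{i,j}\subseteq H_{i,j}$ with $\Sigma(\mathcal{H}_{i,j}) = c_{i,j}$. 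Then from each incident pair $\hat{\mathcal{C}}_i$ receives a determined number of weight-$M$ vertices (some $\sigma/\tau$-vertices plus the $v$-vertices of $\hat{U}_{i,j}$ on its side of the $\hat{U}$-cut) and exactly $c_{i,j}$, respectively $\Sigma(H_{j,i})-c_{j,i}$, weight-$1$ vertices; since $w(\hat{\mathcal{C}}_i)$ equals $L$ plus $M$ times the number of weight-$M$ vertices received plus the number of weight-$1$ vertices received, the latter lying in $\{0,\dots,\Sigma(A)\}$ with $\Sigma(A) < M$, the equality $w(\hat{\mathcal{C}}_i) = \ell$ forces $\hat{\mathcal{C}}_i$ to receive exactly $(k-1)\cdot 2B$ weight-$M$ vertices (the second, ``balancing'', path $\hat{U}_{i,j}$ provides precisely the slack needed for this to be achievable) and exactly $B$ weight-$1$ vertices. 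Reading off the assignment, for $i<j$ put the items of $\mathcal{H}_{i,j}$ in bin $i$ and those of $H_{i,j}\setminus\mathcal{H}_{i,j}$ in bin $j$: each item $a$ with $f(a)=\{i,j\}$ then goes to bin $i$ or $j$, and the load of bin $i$ equals the number of weight-$1$ vertices received by $\hat{\mathcal{C}}_i$, namely $B = \Sigma(A)/k$. Hence $(A,k,f)$ is a Yes-instance of \RUBP.

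\textbf{Main obstacle.} The crux, I expect, is Step 1: deriving, from only $|S|\le p$ and the weight bound $\ell$, the rigid separator structure --- exactly three cuts per pair, one in each of $\hat{U}_{i,j}$, $\hat{D}^1_{i,j}$, $\hat{D}^2_{i,j}$, with the $b$-gadget isolated in its own component --- since this is where the cheap budget $p = 3\binom{k}{2}$ has to be played off against the $L/2$-weights of $b^{i,j}_1, b^{i,j}_3$. Once that structure is in hand, Steps 2 and 3 are bookkeeping that goes through precisely because $L$, $M$ and $1$ sit on well-separated scales.
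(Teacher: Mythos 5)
Your outline tracks the paper's proof closely: your Step 1 corresponds to the paper's first claim (the separator consists of exactly one vertex in each of $\hat{U}_{i,j}$, $\hat{D}^1_{i,j}$, $\hat{D}^2_{i,j}$), your Step 2 is exactly the paper's global weight count (\cref{claim:fes_size_of_comp}, phrased there as $w(\hat{Q}_{i,j}) = \ell + 2M$ for $\hat{Q}_{i,j} = \hat{\mathcal{P}}_{i,j} \cup (S \cap \hat{D}_{i,j})$), and your Step 3 matches the final decoding, including the forced equality $c_{i,j}=c'_{i,j}$ and the count of weight-$M$ and weight-$1$ vertices in each $\hat{\mathcal{C}}_i$. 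Steps 2 and 3 are sound as written, \emph{granted} Step 1.

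The gap is in Step 1, exactly where you flag the crux, and the two justifications you offer there do not hold. First, the budget $p = 3\binom{k}{2}$ is \emph{not} too small to touch cliques of size $4k$ (for $k\ge 4$ it would even suffice to delete a clique almost entirely), and more importantly an adversarial solution need not disconnect a clique at all: deleting a single clique vertex already saves $L/(4k) > (k-1)\cdot 2BM + B$, i.e.\ more than the entire non-clique weight of the intended component, so clique deletions are a genuinely attractive way to cheat. The paper rules this out with an exchange argument (Rule $(\dagger)$): whenever $\ge 2(k-1)$ vertices of some $\hat{C}_i$ are deleted, replace them by $N(\hat{C}_i)$, which has size exactly $2(k-1)$ and isolates the clique as a component of weight $L \le \ell$; only after this normalization do the scale arguments (at most $4k$ vertices of weight $L/(4k)$ per component, at most $2k$ of them together with a weight-$L/2$ vertex) apply, and only then does the budget force $S$ to avoid the cliques entirely. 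Second, your sub-claim that $b^{i,j}_1 \in S$ (or $b^{i,j}_2\in S$) forces a component of weight at least $L + L/2$ is false: if $b^{i,j}_1 \in S$ and the second $\hat{D}$-cut lies in $\hat{D}^2_{i,j}$, then $b^{i,j}_2$ and $b^{i,j}_3$ form their own light component and nothing overweight arises. Such solutions genuinely exist and cannot be refuted; the paper instead \emph{transforms} them, swapping $b^{i,j}_1$ for $\sigma^{i,j}_{\Sigma(H_{i,j})}$ (and symmetrically $b^{i,j}_3$ for $\tau^{i,j}_0$), to reach the canonical form with one cut in each of $\hat{D}^1_{i,j}$ and $\hat{D}^2_{i,j}$. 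Without these two exchange arguments, Step 1 — and hence the proof — does not close.
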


    \begin{nestedproof}
        Let $S_0 \subseteq V(G)$ such that $|S_0| \leq p = 3 \binom{k}{2}$,
        and for every connected component of $G - S_0$ it holds that the sum of the weights of its vertices
        is at most $\ell = L + (k-1) \cdot 2BM + B$.
        The following claim shows that there exists $S \subseteq V(G)$ such that $S \cap \hat{C}_i = \emptyset$, for all $i \in [k]$;
        in fact $S$ contains either $1$ or $2$ vertices per path between cliques.

        \begin{claim}
            There exists a set $S \subseteq V(G)$ such that $|S| \leq p$,
            and for every connected component of $G - S$ it holds that the sum of the weights of its vertices
            is at most $\ell$.
            Additionally, for all $1 \leq i < j \leq k$,
            it holds that $|S \cap \hat{U}_{i,j}| = |S \cap \hat{D}^1_{i,j}| = |S \cap \hat{D}^2_{i,j}| = 1$.
        \end{claim}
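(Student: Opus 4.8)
The plan is to show that the hypothesised separator $S_0$ is already forced to have the claimed shape, the point being that the cliques $\hat C_i$ are far heavier than everything else in the construction. I would first record the magnitudes that matter: each $\hat C_i$ has total weight exactly $L$, every vertex lying on one of the two paths joining some $\hat C_i$ and $\hat C_j$ has weight at most $L/2$, and the construction's assumptions $M>kB$ and $L/(4k)>(k-1)\cdot 2BM+B$ give $\ell=L+(k-1)\cdot 2BM+B<L+L/(4k)<2L$. In particular the whole ``light budget'' $(k-1)\cdot 2BM+B$ that a component containing a clique is allowed to absorb is still less than the weight $L/(4k)$ of a single clique vertex.

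The first step, which I expect to be the crux, is to show that no component of $G-S_0$ contains vertices of two distinct cliques. If some component met both $\hat C_i$ and $\hat C_j$, its weight would be at least $(8k-|S_0\cap\hat C_i|-|S_0\cap\hat C_j|)\cdot L/(4k)$, so $\ell<L+L/(4k)$ would force $|S_0\cap\hat C_i|+|S_0\cap\hat C_j|\ge 4k$. Since lying in a common component of $G-S_0$ is an equivalence relation on the cliques, if the cliques were not all separated then summing this bound over the pairs inside each class of size at least $2$ shows $S_0$ spends many vertices inside cliques; on the other hand each pair of cliques in different classes still has to be disconnected, and the $\hat U$- and $\hat D$-paths carrying those connections are pairwise vertex-disjoint and disjoint from the cliques, so each such pair costs at least two more vertices of $S_0$. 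A careful but elementary comparison of these two costs with the budget $p=3\binom{k}{2}$ (also excluding, along the way, that a single clique receives $2k$ or more vertices of $S_0$) gives a contradiction unless the $k$ clique remnants already occupy $k$ distinct components; in particular no clique is contained in $S_0$ and $|S_0\cap\hat C_i|\le 2k-1$ for all $i$.

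With the cliques pairwise separated I would exploit the weight-$L/2$ vertices $b^{i,j}_1,b^{i,j}_3$. Since $|S_0\cap\hat C_i|\le 2k-1$, the remnant of $\hat C_i$ has weight more than $L/2$, so its component has room strictly less than $L/2$; hence no $b$-vertex shares a component with any clique remnant. Consequently, for each pair $i<j$ the set $\{b^{i,j}_1,b^{i,j}_2,b^{i,j}_3\}$ (of weight more than $L$) must be cut off from both $\hat C_i$ and $\hat C_j$, which forces $S_0$ to contain a vertex of $\hat D^1_{i,j}$ and a vertex of $\hat D^2_{i,j}$: cutting the $\hat D_{i,j}$-path only at $b^{i,j}_1$ or at $b^{i,j}_3$ would leave a weight-$L/2$ vertex attached to a clique, again exceeding $\ell$. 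The $\hat U_{i,j}$-path must likewise be cut in its interior, since its internal vertices are the only ones separating $\hat C_i$ from $\hat C_j$ along it. Thus each of the $\binom{k}{2}$ pairs contributes three pairwise-disjoint sets $\hat U_{i,j},\hat D^1_{i,j},\hat D^2_{i,j}$ that each meet $S_0$, so $|S_0|\ge 3\binom{k}{2}=p\ge|S_0|$ and every inequality is tight. In particular $S_0$ is disjoint from all cliques and $|S_0\cap\hat U_{i,j}|=|S_0\cap\hat D^1_{i,j}|=|S_0\cap\hat D^2_{i,j}|=1$ for all $i<j$, so we may take $S=S_0$.

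The genuinely delicate part is the counting in the first step: one has to play the unavoidable cost, forced by the large clique weights, of placing any separator vertices inside a clique against the cost of cutting the pairwise-disjoint clique-to-clique paths, and check that together these always exceed $3\binom{k}{2}$ unless each clique already lies in its own component. Once that is in place, the heavy vertices $b^{i,j}_1,b^{i,j}_3$ pin down the remaining structure almost immediately.
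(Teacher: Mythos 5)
Your overall architecture (heavy cliques must end up in distinct components, then the weight-$L/2$ vertices $b^{i,j}_1,b^{i,j}_3$ force one cut in each of $\hat U_{i,j},\hat D^1_{i,j},\hat D^2_{i,j}$, and tightness against $p=3\binom{k}{2}$ finishes) matches the paper's, but the crux you flag yourself --- the first step --- has a real gap. Your plan is to rule out merged cliques by pure counting: a merged pair costs $a_i+a_j\ge 4k$ clique vertices, while each cross pair costs at least $2$ path vertices. This does not close for large $k$: with exactly one merged pair the lower bound is $4k+2\bigl(\binom{k}{2}-1\bigr)$, and comparing with $3\binom{k}{2}$ requires $4k-2>\binom{k}{2}$, which fails for $k\ge 9$ (e.g.\ $k=10$ gives $38$ versus $45$). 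To raise the cross-pair cost from $2$ to $3$ you would need the $b$-vertex argument, which itself needs every clique remnant to weigh more than $L/2$, i.e.\ $|S_0\cap\hat C_i|\le 2k-1$ --- exactly what you are trying to establish. The paper breaks this circularity not by counting but by an exchange: Rule $(\dagger)$ replaces any $\ge 2(k-1)$ separator vertices inside a clique $\hat C_i$ by the $2(k-1)$ path endpoints in $N(\hat C_i)$, which never increases the separator or any component weight; after exhaustive application every clique meets the separator in fewer than $2(k-1)$ vertices, and only then do your weight arguments (no two cliques merged, no $b$-vertex with a clique remnant) go through.

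A second, smaller issue: at the end you conclude that one may take $S=S_0$, but the configuration $S_0\cap\hat D_{i,j}=\{b^{i,j}_1,x\}$ with $x\in\hat D^2_{i,j}$ is not excluded by your argument (here $b^{i,j}_1$ itself separates $\hat D^1_{i,j}$-side from $b^{i,j}_2,b^{i,j}_3$, and $\hat D^1_{i,j}$ attaches wholesale to $\hat C_i$'s component), yet it violates $|S\cap\hat D^1_{i,j}|=1$. The claim is existential precisely because such solutions must be \emph{modified}: the paper substitutes $b^{i,j}_1$ by $\sigma^{i,j}_{\Sigma(H_{i,j})}$ (and symmetrically $b^{i,j}_3$ by $\tau^{i,j}_0$) and verifies feasibility of the resulting separator. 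So both the opening and the closing of your argument need exchange steps that the counting framework does not supply.
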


        \begin{claimproof} 
            To prove the statement we will first introduce the following reduction rule.

            \proofsubparagraph*{Rule $(\dagger)$.} 
            Let $Z \subseteq V(G)$ be a deletion set such that $|Z \cap \hat{C}_i| \geq 2(k-1)$ for some $i \in [k]$,
            while every connected component of $G - Z$ has weight at most $\ell$.
            Then, replace $Z$ with $Z' = (Z \setminus \hat{C}_i) \cup N(\hat{C}_i)$,
            where $N(\hat{C}_i) = \bigcup_{v \in \hat{C}_i} N(v) \setminus \hat{C}_i$.

            It is easy to see that $|Z'| \leq |Z|$, since $|N(\hat{C}_i)| = 2(k-1)$.
            Moreover, it holds that every connected component of $G - Z'$ has weight at most $\ell$.
            To see this, it suffices to consider the connected component that contains the vertices of $\hat{C}_i \setminus Z$.
            Let $\hat{\mathcal{C}}_i$ denote the set of vertices of the connected component of $G-Z$ where $\hat{C}_i \setminus Z \subseteq \hat{\mathcal{C}}_i$,
            and $\hat{\mathcal{C}}'_i = \hat{\mathcal{C}}_i \setminus (\hat{C}_i \cup N(\hat{C}_i))$.
            It holds that $w(\hat{\mathcal{C}}'_i) \leq w(\hat{\mathcal{C}}_i) \leq \ell$,
            as well as $w(\hat{C}_i) = L \leq \ell$,
            while in $G-Z'$ the connected component $\hat{\mathcal{C}}_i$ is split into the connected component $\hat{C}_i$ as well as
            a partition of $\hat{\mathcal{C}}'_i$.

            Starting from $S_0$, let $S_1 \subseteq V(G)$ be the set obtained after applying Rule $(\dagger)$ exhaustively.
            Note that this process will finish in at most $k$ steps.
            Notice that it holds that $|S_1 \cap \hat{D}_{i,j}| \geq 1$, for all $1 \leq i < j \leq k$,
            since $w(\hat{D}_{i,j}) > \ell$.
            Moreover, observe that since $L/(4k) > (k-1) \cdot 2B M + B$,
            it follows that $\ell < L/(4k) \cdot (4k+1)$, thus at most $4k$ vertices of weight $L/(4k)$
            may be in the same connected component of $G - S_1$.
            In an analogous way, at most $2k$ vertices of weight $L/(4k)$ may be in the same connected component
            of $G - S_1$ with a vertex of weight $L/2$.
        
            Assume there exist $1 \leq i < j \leq k$ such that $S_1 \cap \hat{U}_{i,j} = \emptyset$.
            In that case, the vertices of $(\hat{C}_i \cup \hat{C}_j) \setminus S_1$ are in the same connected component of $G-S_1$,
            and since every such vertex is of weight $L/(4k)$, it follows that $|S_1 \cap (\hat{C}_i \cup \hat{C}_j)| \geq 4k$.
            Assume without loss of generality that $0 \leq |S_1 \cap \hat{C}_i| \leq |S_1 \cap \hat{C}_j| \leq 4k$,
            thus $|S_1 \cap \hat{C}_j| \geq 2k$ follows, which is a contradiction.

            Next, assume there exists $\hat{D}_{i,j}$ such that $|S_1 \cap \hat{D}_{i,j}| = 1$.
            In that case, it holds that either $b^{i,j}_1 \notin S_1$ or $b^{i,j}_3 \notin S_1$,
            and is in the same connected component of $G - S_1$ as the vertices of either $\hat{C}_i \setminus S_1$ or $\hat{C}_j \setminus S_1$.
            Assume without loss of generality that $b^{i,j}_1$ is in the same connected component of $G - S_1$
            as the vertices of $\hat{C}_i \setminus S_1$.
            Then, since $w(b^{i,j}_1) = L/2$ and every vertex of $\hat{C}_i$ is of weight $L / (4k)$,
            it follows that $|S_1 \cap \hat{C}_i| \geq 2k$, which is a contradiction.

            Consequently, for all $1 \leq i < j \leq k$ it holds that $|S_1 \cap \hat{D}_{i,j}| \geq 2$, as well as $S_1 \cap \hat{U}_{i,j} \neq \emptyset$.
            Since $|S_1| \leq 3 \binom{k}{2}$, it follows that $|S_1 \cap \hat{D}_{i,j}| = 2$, and $|S_1 \cap \hat{U}_{i,j}| = 1$.

            Notice that if $b^{i,j}_2 \in S_1$ or $S_1 \cap \hat{D}_{i,j} \subseteq \hat{D}^1_{i,j} \cup \braces{b^1_{i,j}}$ or
            $S_1 \cap \hat{D}_{i,j} \subseteq \hat{D}^2_{i,j} \cup \braces{b^3_{i,j}}$,
            then it follows once again that either $b^{i,j}_1 \notin S_1$ or $b^{i,j}_3 \notin S_1$,
            and is in the same connected component of $G - S_1$ as the vertices of either $\hat{C}_i \setminus S_1$ or $\hat{C}_j \setminus S_1$,
            leading to contradiction.

            Let $\hat{\mathcal{P}}_{i,j} \subseteq \hat{D}_{i,j}$ denote the connected component (path) of $G - S_1$ such that $b^2_{i,j} \in \hat{\mathcal{P}}_{i,j}$.
            Assume $b^{i,j}_1 \in S_1$ and consider $S_2 = (S_1 \setminus \braces{b^{i,j}_1}) \cup \braces{\sigma^{i,j}_{\Sigma(H_{i,j})}}$.
            Then, it follows that $\hat{\mathcal{P}}_{i,j} \subseteq (\braces{b^{i,j}_2, b^{i,j}_3} \cup \hat{D}^2_{i,j}) \setminus \braces{\tau^{i,j}_{\Sigma(H_{i,j})}}$,
            therefore $w(\hat{\mathcal{P}}_{i,j}) \leq \ell - L/2$.
            Consequently, every connected component of $G - S_2$ has weight at most $\ell$.
            Using analogous arguments, one can substitute $b^3_{i,j}$ with $\tau^{i,j}_0$ in case $b^3_{i,j}$ belongs to the deletion set.
            Let $S$ be the set obtained by those substitutions.
        \end{claimproof} 

        Let $\hat{\mathcal{C}}_i$ denote the connected component of $G - S$ containing the vertices of $\hat{C}_i$.
        Additionally, let $\hat{\mathcal{P}}_{i,j} \subseteq \hat{D}_{i,j}$ denote the connected component (path) of $G - S$ such that
        $\hat{D}^3_{i,j} \subseteq \hat{\mathcal{P}}_{i,j}$.

        \begin{claim}\label{claim:fes_size_of_comp}
            For all $1 \leq i < j \leq k$,
            it holds that $w(\hat{\mathcal{P}}_{i,j}) = \ell$,
            while both vertices of $S \cap \hat{D}_{i,j}$ are of weight $M$.
            Moreover, it holds that $w(\hat{\mathcal{C}}_i) = \ell$, for all $i \in [k]$.
        \end{claim}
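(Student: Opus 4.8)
The plan is a global counting argument. Recall from the preceding claim that we may assume $|S| = p = 3\binom{k}{2}$, with $S$ containing, for every pair $1 \le i < j \le k$, exactly one vertex from each of $\hat{U}_{i,j}$, $\hat{D}^1_{i,j}$, $\hat{D}^2_{i,j}$ and nothing else; in particular $S$ meets no clique and no $\hat{D}^3_{i,j}$, so $b^{i,j}_1, b^{i,j}_2, b^{i,j}_3$ all survive in $G - S$.

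The first step is to identify $\cc(G - S)$ precisely. Removing the single $S$-vertex on the $\hat{U}_{i,j}$-path merely breaks it into two dead-end segments attached to $\hat{\mathcal{C}}_i$ and $\hat{\mathcal{C}}_j$. Removing the $S$-vertices on the $\hat{D}_{i,j}$-path — one in the $\hat{D}^1_{i,j}$-portion between $\hat{C}_i$ and $b^{i,j}_1$, one in the $\hat{D}^2_{i,j}$-portion between $b^{i,j}_3$ and $\hat{C}_j$ — breaks that path into a segment attached to $\hat{\mathcal{C}}_i$, the middle segment $\hat{\mathcal{P}}_{i,j}$ containing $b^{i,j}_1, b^{i,j}_2, b^{i,j}_3$ and touching no clique, and a segment attached to $\hat{\mathcal{C}}_j$. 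Every non-deleted vertex lies in one of these pieces and they are pairwise distinct, so $\cc(G - S)$ has exactly $k + \binom{k}{2}$ elements: the $\hat{\mathcal{C}}_i$ and the $\hat{\mathcal{P}}_{i,j}$.

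Since every component has weight at most $\ell$,
\[
    \sum_{i \in [k]} w(\hat{\mathcal{C}}_i) + \sum_{1 \le i < j \le k} w(\hat{\mathcal{P}}_{i,j}) \;=\; w(V(G)) - w(S) \;\le\; \Bigl(k + \binom{k}{2}\Bigr)\ell .
\]
The core of the argument is to evaluate both sides exactly and substitute. Using $w(\hat{C}_i) = L$, the prescribed weights of the $\hat{U}$- and $\hat{D}$-vertices, and $\sum_{i<j}\Sigma(H_{i,j}) = \Sigma(A) = kB$ (each item lies in exactly one $H_{i,j}$), a routine summation yields $w(V(G))$ in closed form; notably the $|\mathcal{S}(H_{i,j})|$-dependent contributions of $\hat{U}_{i,j}$ and of $\hat{D}_{i,j}$ cancel. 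Meanwhile $w(S) = \sum_{i<j}(M + w^1_{i,j} + w^2_{i,j})$, where $w^1_{i,j}, w^2_{i,j} \in \{1, M\}$ are the weights of the vertices $S$ deletes from $\hat{D}^1_{i,j}$ and $\hat{D}^2_{i,j}$ respectively (each is an $s$/$t$-vertex of weight $1$ or a $\sigma$/$\tau$-vertex of weight $M$), and the deleted $\hat{U}_{i,j}$-vertex has weight $M$. Plugging both closed forms into the displayed inequality and simplifying — the $(k-1)\cdot 2BM$ and $kB$ terms cancel against $k\ell$ — the inequality collapses exactly to $\sum_{i<j}(2M - w^1_{i,j} - w^2_{i,j}) \le 0$. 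Since each summand is non-negative, all are zero, forcing $w^1_{i,j} = w^2_{i,j} = M$: the two vertices of $S \cap \hat{D}_{i,j}$ are a $\sigma$- and a $\tau$-vertex, as claimed. Feeding $w^1_{i,j} = w^2_{i,j} = M$ back makes the inequality an equality, so $\sum w(\hat{\mathcal{C}}_i) + \sum w(\hat{\mathcal{P}}_{i,j}) = (k + \binom{k}{2})\ell$; combined with each component weighing at most $\ell$, every component weighs exactly $\ell$, giving $w(\hat{\mathcal{P}}_{i,j}) = \ell$ and $w(\hat{\mathcal{C}}_i) = \ell$.

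The only delicate part is the exact evaluation of $w(V(G))$: there are $\binom{k}{2}$ path gadgets, each carrying two weight-$L/2$ vertices $b^{i,j}_1, b^{i,j}_3$, a vertex $b^{i,j}_2$ whose weight depends on both $\Sigma(H_{i,j})$ and $|\mathcal{S}(H_{i,j})|$, a $\hat{U}_{i,j}$-path whose length depends on $|\mathcal{S}(H_{i,j})|$, and further weight-$1$ and weight-$M$ vertices counted by $\Sigma(H_{i,j})$ and $|\mathcal{S}(H_{i,j})|$; one must verify that all $|\mathcal{S}(H_{i,j})|$-dependence cancels and that the residual $\Sigma(H_{i,j})$-terms sum to $kB$. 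Everything else is structural and routine.
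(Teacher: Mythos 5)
Your proof is correct and follows essentially the same route as the paper's: a global weight-counting argument comparing the exact value of $w(G)$ (with the $|\mathcal{S}(H_{i,j})|$-dependent terms cancelling) against the bound $(k+\binom{k}{2})\ell$ on the total component weight, forcing equality everywhere and forcing the two deleted $\hat{D}_{i,j}$-vertices to have weight $M$. The paper packages the same computation slightly differently, by bounding $w(\hat{\mathcal{P}}_{i,j} \cup (S\cap\hat{D}_{i,j})) \le \ell + 2M$ per pair and showing the sum of these quantities is at least $\binom{k}{2}(\ell+2M)$, but the underlying argument is identical.
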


        \begin{claimproof} 
            Let $\hat{Q}_{i,j} = \hat{\mathcal{P}}_{i,j} \cup (S \cap \hat{D}_{i,j})$.
            Notice that since $\hat{D}^3_{i,j} \subseteq \hat{\mathcal{P}}_{i,j}$,
            it holds that the vertices of $S \cap \hat{D}_{i,j}$ are of weight at most $M$,
            thus $w (\hat{Q}_{i,j}) \leq \ell + 2M$.
            For the first statement, it suffices to prove that it holds $w (\hat{Q}_{i,j}) = \ell + 2M$.

            In the following we prove that $w (\hat{Q}_{i,j}) \geq \ell + 2M$.
            Notice that there are exactly $k$ additional connected components $\hat{\mathcal{C}}_1, \ldots, \hat{\mathcal{C}}_k$ in $G - S$,
            apart from all components $\hat{\mathcal{P}}_{i,j}$.
            Moreover, their weight sums up to
            \begin{equation}\label{eq:weight_of_C}                
                \sum_{i \in [k]} w(\hat{\mathcal{C}}_i) = w(G) - \sum_{1 \leq i < j \leq k} w(\hat{Q}_{i,j}) - \binom{k}{2} \cdot M,
            \end{equation}
            where the last term is due to the vertices in $S \cap \hat{U}_{i,j}$ which are of weight $M$,
            therefore
            \begin{equation}\label{eq:ugly_inequality}                
                w(G) - \sum_{1 \leq i < j \leq k} w(\hat{Q}_{i,j}) - \binom{k}{2} \cdot M \leq k \cdot \ell.
            \end{equation}
            In order to compute $w(G)$, notice that
            \begin{itemize}
                \item $w(\hat{C}_i) = L$, for all $i \in [k]$,
                \item $w(\hat{U}_{i,j}) = (4B - |\mathcal{S}(H_{i,j})| + 2) \cdot M$, for all $1 \leq i < j \leq k$,
                \item $w(\hat{D}_{i,j}) = 2 \Sigma(H_{i,j}) + 2 |\mathcal{S}(H_{i,j})| \cdot M + L + (k-1) \cdot 2BM + B - \Sigma(H_{i,j}) - (|\mathcal{S}(H_{i,j})| - 1) \cdot M$,
                for all $1 \leq i < j \leq k$,
            \end{itemize}
            and adding up the last two items gives
            \[
                w(\hat{U}_{i,j} \cup \hat{D}_{i,j}) = \Sigma(H_{i,j}) + L + B + (k-1) \cdot 2BM + (4B + 3) \cdot M.
            \]
            Consequently, it follows that
            \begin{align*}
                w(G) &=
                kL + kB + \binom{k}{2} \parens*{L + B + (k-1) \cdot 2BM + (4B + 3) \cdot M}\\
                &= k (L + B + (k-1) \cdot 2BM) + \binom{k}{2} \parens*{L + B + (k-1) \cdot 2BM + 3M}\\
                &= k \ell + \binom{k}{2} (\ell + 3M),
            \end{align*}
            which due to~\cref{eq:ugly_inequality} gives
            \begin{align*}
                k \ell + \binom{k}{2} (\ell + 3M)
                - \sum_{1 \leq i < j \leq k} w(\hat{Q}_{i,j}) - \binom{k}{2} \cdot M \leq k \cdot \ell,
            \end{align*}
            thus
            \[
                \binom{k}{2} (\ell + 2M) \leq \sum_{1 \leq i < j \leq k} w(\hat{Q}_{i,j}),
            \]
            and since $w (\hat{Q}_{i,j}) \leq \ell + 2M$,
            it follows that $w (\hat{Q}_{i,j}) = \ell + 2M$,
            for all $1 \leq i < j \leq k$.

            As for the weight of the components $\hat{\mathcal{C}}_i$,
            due to $w(\hat{\mathcal{C}}_i) \leq \ell$ and~\cref{eq:weight_of_C},
            it follows that $w(\hat{\mathcal{C}}_i) = \ell$ for all $i \in [k]$.
        \end{claimproof} 

        Due to~\cref{claim:fes_size_of_comp}, it follows that $S \cap \hat{D}_{i,j} = \braces{\sigma^{i,j}_q, \tau^{i,j}_q}$,
        for some $q \in \mathcal{S}(H_{i,j})$.
        Since $w(\hat{\mathcal{C}}_i) = \ell$, while $kB < M$, it follows that $\hat{\mathcal{C}}_i$ contains exactly $(k-1) \cdot 2B$ vertices of weight $M$,
        as well as exactly $B$ vertices of weight $1$.
        Let $(\mathcal{A}_1, \ldots, \mathcal{A}_k)$ be a partition of $A$ defined in the following way:
        for all $1 \leq i < j \leq k$, if $\sigma^{i,j}_q \in S$,
        then $\mathcal{A}_i \cap H_{i,j} = \mathcal{H}_{i,j}$ and $\mathcal{A}_j \cap H_{i,j} = H_{i,j} \setminus \mathcal{H}_{i,j}$,
        where $\mathcal{H}_{i,j} \subseteq H_{i,j}$ such that $\Sigma(\mathcal{H}_{i,j}) = q$.
        Notice that $\Sigma(\mathcal{A}_i)$ is equal to the number of vertices of weight $1$ in $\hat{\mathcal{C}}_i$,
        therefore $\Sigma(\mathcal{A}_i) = B$ follows.
    \end{nestedproof}

    \begin{lemma}\label{lemma:fes_parameter_bound}
        It holds that $\fes(G) = \bO (k^3)$ and $\Delta(G) = \bO(k)$.
    \end{lemma}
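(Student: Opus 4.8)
The plan is to establish the two bounds by a direct structural inspection of the graph $G$ built in the proof of~\cref{thm:SWCOC:fes}, handling the degree bound first and then the feedback edge set bound via the cyclomatic-number formula.

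For $\Delta(G)=\bO(k)$, I would go through the vertex classes one by one. Every vertex of a clique $\hat C_i$ has $4k-1$ neighbours inside $\hat C_i$, and in addition exactly two neighbours per pair it participates in, namely one endpoint of $\hat U_{i,j}$ and one endpoint of the $\hat D_{i,j}$-path for each $j\neq i$; hence its degree is $4k-1+2(k-1)=\bO(k)$. The (at most) four vertices of each gadget that are joined to a whole clique, i.e.\ $v^{i,j}_0$, $v^{i,j}_{4B-|\mathcal S(H_{i,j})|+1}$, $\sigma^{i,j}_0$ and $\tau^{i,j}_{\Sigma(H_{i,j})}$, have $4k$ clique-neighbours plus one path-neighbour, so degree $4k+1=\bO(k)$. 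Every remaining vertex, namely the internal vertices of the two paths, the three $b^{i,j}_\ell$, and the $s,t,\sigma,\tau$ vertices of intermediate index, has degree exactly $2$. Taking the maximum gives $\Delta(G)=\bO(k)$; in particular there are no degree-$1$ vertices.

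For $\fes(G)=\bO(k^3)$, I would use that for a connected graph $\fes(G)=|E(G)|-|V(G)|+1$, that $G$ is connected (all cliques are pairwise linked by paths), and that this quantity is invariant under suppressing (smoothing) a degree-$2$ vertex. Because each of the two paths in a gadget joins two \emph{distinct} cliques, suppression never produces a loop, so it is well defined, and exhaustively suppressing all degree-$2$ vertices collapses each gadget path to a single edge. The resulting graph $G'$ has vertex set consisting only of the $4k^2$ clique vertices together with at most $4\binom{k}{2}$ surviving gadget endpoints, so $|V(G')|=\bO(k^2)$; its edges are the $k\binom{4k}{2}$ clique edges plus, per pair, the $\bO(k)$ clique–endpoint edges and the two edges obtained from the collapsed paths, so $|E(G')|=k\binom{4k}{2}+\binom{k}{2}\cdot\bO(k)=\bO(k^3)$. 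Hence $\fes(G)=\fes(G')\le|E(G')|=\bO(k^3)$.

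There is no genuine obstacle here; the work is purely bookkeeping. The points that need care are (i) correctly cataloguing which gadget vertices have degree larger than $2$ and therefore survive suppression, and (ii) checking that suppression stays loop-free, both of which follow immediately from the explicit adjacency list of the construction. As an alternative to suppression one could simply compute $|E(G)|-|V(G)|$ directly and observe that every term depending on $B$, $\Sigma(H_{i,j})$ or $|\mathcal S(H_{i,j})|$ cancels, since those quantities are carried by induced paths, leaving a residue of $\bO(k^3)$; the suppression argument is just the conceptual explanation of why that cancellation occurs.
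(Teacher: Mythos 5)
Your proof is correct and takes essentially the same route as the paper: the degree catalogue is identical, and your suppression/cyclomatic-number computation of $\fes$ is just a repackaging of the paper's argument, which instead exhibits the explicit feedback edge set consisting of all clique edges together with all edges incident to the path endpoints (of size $k\binom{4k}{2}+4\binom{k}{2}\cdot 4k=\bO(k^3)$) and observes that the remaining graph is a forest. Both arguments rest on the same observation, namely that $G$ is a union of $k$ cliques joined by induced paths, so all $B$-dependent quantities sit on degree-$2$ vertices and cannot contribute to $\fes$.
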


    \begin{nestedproof}
        Let $F \subseteq E(G)$ contain all edges between vertices of $\hat{C}_i$, for all $i \in [k]$,
        as well as all edges which are adjacent to the endpoints of the paths $\hat{U}_{i,j}$ and $\hat{D}_{i,j}$.
        Notice that
        \[
            |F| = k \cdot \binom{4k}{2} + 4\binom{k}{2} \cdot 4k
            = \bO(k^3),
        \]
        while the graph remaining after the deletion of the edges in $F$ is a forest.
        
        For the maximum degree, notice that $|N(v)| = 4k-1 + 2(k-1) = \bO(k)$, for all $v \in \bigcup_{i \in [k]} \hat{C}_i$.
        As for the vertices of the paths, they are all of degree $2$,
        apart from the endpoints which are neighbors with all the vertices of a single clique,
        therefore of degree $\bO(k)$.
    \end{nestedproof}

    Due to~\cref{lemma:fes_forward_direction,lemma:fes_opposite_direction,lemma:fes_parameter_bound}, the statement follows.
\end{proof}

By \cref{thm:swcoc_to_swvi,thm:SWCOC:fes}, the hardness of \SWVI{} follows.

\begin{theorem}\label{thm:SemiVI:fes}
    \SWVI{} is W[1]-hard parameterized by $\fes + \Delta$.
\end{theorem}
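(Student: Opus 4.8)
The plan is to obtain the statement as an immediate consequence of the two results already established in this section, via composition of fpt-reductions. \cref{thm:SWCOC:fes} shows that \SWCOC{} is W[1]-hard parameterized by $\fes + \Delta$, and \cref{thm:swcoc_to_swvi} gives an fpt-reduction from \SWCOC{} parameterized by $\fes + \Delta$ to \SWVI{} parameterized by the same combined parameter. Since W[1]-hardness transfers along fpt-reductions and fpt-reductions are closed under composition, \SWVI{} is W[1]-hard parameterized by $\fes + \Delta$.

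Concretely, I would start from an arbitrary instance $(A,k,f)$ of \RUBP, which by the first theorem of \cref{subsec:fes_tools} is W[1]-hard parameterized by the number of bins $k$. Applying the reduction behind \cref{thm:SWCOC:fes} produces an equivalent instance $(G,w,\ell,p)$ of \SWCOC{} with $\fes(G) + \Delta(G) = \bO(k^3)$, so the parameter is bounded by a function of $k$. Feeding $(G,w,\ell,p)$ into the reduction behind \cref{thm:swcoc_to_swvi} yields an equivalent instance $(G',w',k')$ of \SWVI{} with $\fes(G') = \fes(G)$ and $\Delta(G') \le \Delta(G) + 1$, hence $\fes(G') + \Delta(G')$ remains bounded by a function of $k$. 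Correctness of the equivalence at each stage is exactly the content of \cref{thm:SWCOC:fes} and \cref{thm:swcoc_to_swvi} respectively, and both maps run in fpt time, so the composition is an fpt-reduction from a W[1]-hard problem to \SWVI{} parameterized by $\fes + \Delta$.

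There is essentially no obstacle left at this stage: the difficult constructions — the gadget of \cref{thm:SWCOC:fes} encoding subset sums by paths of unit-weight vertices while keeping both $\fes$ and $\Delta$ bounded, and the weight blow-up of \cref{thm:swcoc_to_swvi} that eliminates the prescribed separator size at the cost of only one extra leaf per vertex and one extra unit of degree — have already been carried out. The only point worth stating explicitly in the write-up is that neither reduction inflates $\fes + \Delta$ beyond a function of $k$, which is precisely what \cref{lemma:fes_parameter_bound} and the final paragraph of the proof of \cref{thm:swcoc_to_swvi} guarantee.
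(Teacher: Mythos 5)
Your proposal is correct and matches the paper exactly: the paper derives \cref{thm:SemiVI:fes} as an immediate consequence of \cref{thm:SWCOC:fes} and \cref{thm:swcoc_to_swvi}, composing the two fpt-reductions just as you describe. Nothing further is needed.
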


Moreover, we can easily reduce an instance $(G,w,k)$ of \SWVI{} to an instance $(G',k)$ of \UVI{}
by attaching a path on $w(v)-1$ vertices
to each vertex $v$ (we assume that $w(v) \leq k$,
otherwise $v$ belongs to the deletion set).
Thus, $\fes(G') = \fes(G)$ and $\Delta(G') = \Delta(G) + 1$,
and due to~\cref{thm:SemiVI:fes} the main result of this section follows.

\begin{theorem}\label{thm:UVI:fes}
    \UVI{}  is W[1]-hard parameterized by $\fes + \Delta$.
\end{theorem}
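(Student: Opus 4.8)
The plan is to obtain \cref{thm:UVI:fes} as a short corollary of the hardness of \SWVI{} established in \cref{thm:SemiVI:fes} (itself the combination of \cref{thm:swcoc_to_swvi,thm:SWCOC:fes}), by giving an elementary parameterized reduction from \SWVI{} to \UVI{} that perturbs $\fes$ and $\Delta$ only by additive constants. Given an instance $(G,w,k)$ of \SWVI{}, I would first apply the obvious cleanup rule: any vertex $v$ with $w(v)>k$ must lie in every feasible deletion set, so we delete it and decrement $k$, leaving us with an instance in which $w(v)\le k$ for all $v$. Then I would build $G'$ from $G$ by attaching to each $v\in V(G)$ a pendant path $P_v$ on $w(v)-1$ fresh unit-weight vertices, and keep the same budget $k$. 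Since each $P_v$ consists of vertices of degree at most $2$ and hangs off $G$ by a single edge, no new cycle is created, so $\fes(G')=\fes(G)$; and each vertex of $G$ acquires at most one new neighbour, so $\Delta(G')\le\Delta(G)+1$. Hence $\fes(G')+\Delta(G')$ is bounded by a function of $\fes(G)+\Delta(G)$, and it suffices to prove that $(G,w,k)$ and $(G',k)$ are equivalent.

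For the forward direction, if $S\subseteq V(G)$ is a solution of the \SWVI{} instance, I would argue that $S$, viewed inside $V(G')$, is a $\vi(k)$-set of $G'$: every connected component $D$ of $G-S$ expands to the component $\bigcup_{v\in D}(\{v\}\cup P_v)$ of $G'-S$, which has exactly $w(D)$ vertices, while for each $v\in S$ the path $P_v$ becomes its own component with $w(v)-1<k$ vertices; thus $|S|+\max_{D'}|D'|\le k$, using that $|S|+w(v)-1\le k$ for every $v\in S$ (this is immediate on the instances produced by our chain of reductions, where $|S|$ is at most $p$ and every weight is at most $k-p$). The converse direction is where I expect the (modest) care to be needed: given a $\vi(k)$-set $S'$ of $G'$, define $S\subseteq V(G)$ by keeping the vertices of $S'$ that already lie in $V(G)$ and replacing every vertex of $S'$ lying on some pendant path $P_v$ by its anchor $v$. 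A straightforward injectivity argument — charge each $v$ placed in $S$ to a distinct vertex of $S'$ inside $\{v\}\cup P_v$ — gives $|S|\le|S'|\le k$. Moreover, for any component $D$ of $G-S$, neither its vertices nor any of the attached path vertices lie in $S'$, so $\bigcup_{v\in D}(\{v\}\cup P_v)$ is connected in $G'-S'$ and hence contained in a single component $D'$ of $G'-S'$; therefore $w(D)\le|D'|$ and $|S|+w(D)\le|S'|+|D'|\le k$, so $S$ is a feasible \SWVI{}-solution.

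Combining the two directions, $(G,w,k)\mapsto(G',k)$ is a parameterized reduction from \SWVI{} parameterized by $\fes+\Delta$ to \UVI{} parameterized by $\fes+\Delta$, and \cref{thm:UVI:fes} then follows from \cref{thm:SemiVI:fes}. I do not expect a real obstacle at this last step: all the difficulty of the section is concentrated in the gadget of \cref{thm:SWCOC:fes} (the subset-sum path together with its balancing path), which is already in place. The only point here that warrants attention is checking that pushing separator vertices off the pendant paths onto their anchors can neither increase the separator size nor merge two components that a \SWVI{}-solution is obliged to keep apart, and this is exactly what the anchor-replacement argument above verifies.
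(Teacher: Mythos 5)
Your reduction is exactly the one the paper uses: attach to each vertex $v$ of the \SWVI{} instance from \cref{thm:SemiVI:fes} a pendant path on $w(v)-1$ vertices, observe that $\fes$ is unchanged and $\Delta$ grows by at most one, and transfer separators in both directions (pushing separator vertices on pendant paths back to their anchors for the converse). You even correctly flag the one point the paper leaves implicit --- that for $v\in S$ the orphaned path $P_v$ must satisfy $|S|+w(v)-1\le k$, which holds for the instances produced by the chain of reductions though not for arbitrary \SWVI{} instances --- so the argument is sound and matches the paper's.
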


\section{Max-Leaf Number}

In this section, we consider \UVI{} parameterized by the max-leaf number.
For a connected graph $G$ we denote by $\ml(G)$ the
maximum number of leaves of any spanning tree of $G$. This is a well-studied
but very restricted parameter \cite{FellowsJR13,FellowsLMMRS09,Lampis12}. In
particular, it is known that if a graph $G$ has $\ml(G) \le k$, then in fact $G$
is a subdivision of a graph on $\bO(k)$ vertices~\cite{KleitmanW91}. We are
motivated to study this parameter because in a sense it lies close to the
intractability boundary established in \cref{sec:fes}. Observe
that if a graph is a sub-division of a graph on $k$ vertices, then it has 
maximum degree at most $k$ and feedback edge set at most $k^2$; however, graphs of small
feedback edge set and small degree do not necessarily have small max-leaf
number (consider a long path where we attach a leaf to each vertex).
Interestingly, the graphs we construct in~\cref{subsec:rubp_to_swcoc} \emph{do} have small
max-leaf number, if we consider semi-weighted instances.  However, adding the
necessary simple gadgets in order to simulate weights increases the max-leaf number of
the graphs of our reduction. It is thus a natural question whether
this is necessary. In this section, we show that indeed this is inevitable, as
\VI{} is FPT parameterized by $\ml$.

We start with a high-level overview of our approach.
As mentioned, we will rely on the result of Kleitman and West~\cite{KleitmanW91} who showed that if a
graph $G=(V,E)$ has $\ml(G)\le k$, then there exists a set $X$ of size
$|X|=\bO(k)$ such that all vertices of $V\setminus X$ have degree at most $2$.
Our main tool is a lemma (\cref{lem:ml}) which allows us to ``rotate''
solutions: whenever we have a cycle in our graph, we can, roughly speaking,
exchange every vertex of $S$ in the cycle with the next vertex, until we reach
a point where our solution removes strictly more vertices of $X$. We therefore
guess the largest intersection of an optimal separator with $X$, and can now
assume that in every remaining cycle, the separator $S$ is not using any
vertices. This allows us to simplify the graph in a way that removes all cycles
and reduces the case to a tree, which is polynomial-time solvable.

Let us now give more details. We first recall the result of~\cite{KleitmanW91}.

\begin{lemmarep}\label{lem:kw}
    In any graph $G$, the set $X$ of vertices of degree at least $3$ has size at most $|X| \leq 12\ml(G) + 32$.
\end{lemmarep}

\begin{proof}
    The statement is trivially true if the maximum degree of the graph is less than $3$,
    so in the following assume that this is not the case.
    Kleitman and West actually proved that in a graph $G$ with $n$ vertices and minimum degree $3$ or more,
    $\ml(G) \geq n/4$.
    Given a graph $G$, let $X_G$ and $Y_G$ denote the set of vertices of degree at least $3$
    and at most $2$ respectively, i.e.~$V(G) = X_G \cup Y_G$ where
    $X_G = \setdef{v \in V(G)}{\deg_G(v) \geq 3}$ and $Y_G = \setdef{v \in V(G)}{\deg_G(v) \leq 2}$.
    Start from an arbitrary (connected) graph $G$, and greedily contract any edge in $G$,
    as long as the number of vertices of degree at least $3$ is not reduced.
    Call the resulting graph $G'$, where $|X_{G'}| \geq |X_G|$.
    Notice that the contraction of an edge does not increase the max-leaf number,
    thus $\ml(G') \leq \ml(G)$.
    In case $Y_{G'} = \emptyset$ the statement immediately follows by the result of Kleitman and West,
    so in the following assume that $Y_{G'} \neq \emptyset$.
    Moreover, for all $v \in Y_{G'}$, it holds that $N_{G'} (v) \cap X_{G'} \neq \emptyset$;
    if that were not the case, then contracting an edge incident to $v$ does not decrease
    the degree of any vertex in $X_{G'}$, which is a contradiction since in $G'$ any such edge
    has already been contracted.

    \begin{claim}
        It holds that $|Y_{G'}| \leq \ml(G')$.
    \end{claim}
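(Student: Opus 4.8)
In any spanning tree $T$ of $G'$ the number of leaves equals $|V(G')| - (\text{number of internal vertices}) = |X_{G'}| + |Y_{G'}| - (\text{number of internal vertices})$. Hence it suffices to exhibit a spanning tree $T$ of $G'$ in which \emph{every} vertex of $Y_{G'}$ is a leaf: then all internal vertices lie in $X_{G'}$, so their number is at most $|X_{G'}|$, and $T$ has at least $|Y_{G'}|$ leaves, giving $\ml(G') \ge |Y_{G'}|$. Building such a $T$ amounts to finding a spanning tree $T_X$ of $G'[X_{G'}]$ and then hanging each vertex of $Y_{G'}$ off of $X_{G'}$ by a single edge, which is possible because every $v\in Y_{G'}$ has a neighbor in $X_{G'}$. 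The only thing that needs work is to guarantee (i) that $G'[X_{G'}]$ is connected, and (ii) that the attachments of $Y$-vertices do not create cycles.

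\textbf{Structure of the $Y$-part.} First, since $\deg_{G'}(v)\le 2$ and $v$ has a neighbor in $X_{G'}$ for every $v\in Y_{G'}$, each vertex of $Y_{G'}$ has at most one neighbor inside $Y_{G'}$; thus the connected components of $G'[Y_{G'}]$ are single vertices or single edges. Next, I would use that $G'$ is \emph{contraction-terminal} (no edge can be contracted without decreasing $|X|$). A short degree computation for a simple-graph contraction shows: if $v\in Y_{G'}$ is adjacent to $x\in X_{G'}$, then contracting $vx$ merges $x$ with $v$ into a vertex of degree $\deg_{G'}(x)-1+[\,v\text{'s other neighbor}\notin N_{G'}(x)\,]$ (and affects no other $X$-vertex's degree), so for this not to remove $x$ from $X$ we must have $\deg_{G'}(x)=3$ and $v$'s other neighbor adjacent to $x$. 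Applying this to the two types of $Y$-components yields: a degree-$2$ $Y$-singleton $v$ with $X$-neighbors $x_1,x_2$ has $x_1x_2\in E(G')$; and a $Y$-edge $\{v_1,v_2\}$ with $v_i$ adjacent to $x_i\in X_{G'}$ satisfies $x_1=x_2$, so $\{v_1,v_2,x_1\}$ spans a triangle.

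\textbf{Connectivity of $G'[X_{G'}]$ and assembling $T$.} Suppose $G'[X_{G'}]$ were disconnected and pick, over all pairs of $X$-vertices in different components of $G'[X_{G'}]$, a shortest connecting path in $G'$; by minimality all its internal vertices lie in $Y_{G'}$, and by the component structure above there are exactly one or two of them. One internal vertex $v$ would be a degree-$2$ $Y$-singleton whose two $X$-neighbors lie in different components of $G'[X_{G'}]$, contradicting $x_1x_2\in E(G')$; two internal vertices would form a $Y$-edge whose attachment vertices $x_1,x_2$ lie in different components, contradicting $x_1=x_2$. Hence $G'[X_{G'}]$ is connected. Now take a spanning tree $T_X$ of $G'[X_{G'}]$; attach each $Y$-singleton by one edge to an $X$-neighbor, and each $Y$-edge $\{v_1,v_2\}$ by the two edges $v_1x_1, v_2x_1$ to its common $X$-neighbor $x_1$ (omitting the edge $v_1v_2$). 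Every vertex of $Y_{G'}$ then has degree $1$ in the resulting tree, so $T$ is a spanning tree of $G'$ with all $Y$-vertices as leaves, and $\ml(G')\ge|Y_{G'}|$. The main obstacle is the second paragraph: the elementary but somewhat delicate degree bookkeeping for simple-graph edge contractions that pins down the exact shape of the $Y$-components, which is exactly what powers the connectivity argument.
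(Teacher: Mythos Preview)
Your proof is correct and follows the same high-level strategy as the paper: construct a spanning tree of $G'$ in which every vertex of $Y_{G'}$ is a leaf. The execution differs, however. The paper starts from an arbitrary spanning tree and performs local edge swaps: for each degree-$2$ vertex $v\in Y_{G'}$ with neighbors $u_1,u_2$ it first shows $u_1u_2\in E(G')$ (via the same contraction-terminality reasoning you use), and then replaces the tree edges incident to $v$ so that $v$ becomes a leaf, handling the cases $u_1,u_2\in X_{G'}$ and $u_1\in Y_{G'}$, $u_2\in X_{G'}$ separately. You instead establish the stronger structural fact that $G'[X_{G'}]$ is connected and build the tree from scratch, taking a spanning tree of $G'[X_{G'}]$ and attaching each $Y$-vertex as a pendant. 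This is arguably cleaner, since it avoids having to argue that the local swaps do not interfere with one another, at the price of the extra connectivity argument (which you carry out correctly).

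One small imprecision worth fixing: your general claim that contracting $vx$ ``affects no other $X$-vertex's degree'' is not always true. In the $Y$-singleton case, $v$'s other neighbor is some $x'\in X_{G'}$; if $x'\in N_{G'}(x)$ then $x'$ loses one from its degree after the contraction, and the required drop of $|X|$ may come from $x'$ rather than from the merged vertex --- so in particular $\deg_{G'}(x)=3$ need not hold. This does not harm your argument, because the only fact you actually use for singletons is $x_1x_2\in E(G')$, and that still follows: if $x_1x_2\notin E(G')$ then the merged vertex keeps degree $\deg_{G'}(x_1)\ge 3$ and no other vertex's degree changes, contradicting contraction-terminality. For the $Y$-edge case your parenthetical is valid (the other neighbor lies in $Y_{G'}$), and the conclusion $x_1=x_2$ is exactly right. (Also, ``for this not to remove $x$ from $X$'' should presumably read ``for this to remove $x$ from $X$''.)
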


    \begin{claimproof}
        It suffices to prove that there exists a spanning tree of $G'$
        where every vertex of $Y_{G'}$ is a leaf.
        Let $v \in Y_{G'}$, and notice that if $\deg_{G'} (v) = 1$,
        then $v$ is a leaf in any spanning tree of $G'$.
        In the following, assume that $N_{G'} (v) = \braces{u_1, u_2}$,
        where $N_{G'} (v) \cap X_{G'} \neq \emptyset$.
        Moreover, we claim that $\braces{u_1, u_2} \in E(G')$.
        Assume otherwise, and notice that in that case,
        the contraction of any edge incident on $v$ does not change the degree of any other vertex,
        which is a contradiction.
        
        First consider the case where $u_1, u_2 \in X_{G'}$.
        Given a spanning tree of $G'$ that contains both edges $\braces{v, u_1}$ and $\braces{v, u_2}$,
        one can obtain another spanning tree that contains
        edges $\braces{v, u_1}$ and $\braces{u_1, u_2}$ instead.

        For the remaining case, assume without loss of generality that $u_1 \in Y_{G'}$ and $u_2 \in X_{G'}$.
        Then, given a spanning tree of $G'$ that contains the edges $\braces{v, u_1}$ and either
        $\braces{v, u_2}$ or $\braces{u_1,u_2}$,
        one can obtain another spanning tree that contains
        edges $\braces{v, u_2}$ and $\braces{u_1, u_2}$ instead.
        
        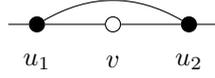
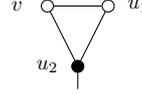
\begin{figure}[ht]
\centering 
  \begin{subfigure}[b]{0.4\textwidth}
  \centering
    \begin{tikzpicture}[scale=1, transform shape]
        \node[] (s) at (4.5,5) {};
        \node[black_vertex] (u1) at (5,5) {};
        \node[] () at (5,4.5) {$u_1$};
        \node[vertex] (v) at (6,5) {};
        \node[] () at (6,4.5) {$v$};
        \node[black_vertex] (u2) at (7,5) {};
        \node[] () at (7,4.5) {$u_2$};
        \node[] (t) at (7.5,5) {};
        \draw[] (s)--(u1)--(v)--(u2)--(t);
        \draw[] (u1) edge [bend left] (u2);
    \end{tikzpicture}
    \caption{Case 1.}
    \label{fig:ml_kw1}
  \end{subfigure}
\begin{subfigure}[b]{0.4\linewidth}
\centering
    \begin{tikzpicture}[scale=0.8, transform shape]
        \node[vertex] (v) at (5,5) {};
        \node[] () at (4.5,5) {$v$};
        \node[vertex] (u1) at (6,5) {};
        \node[] () at (6.5,5) {$u_1$};
        \node[black_vertex] (u2) at (5.5,4) {};
        \node[] () at (5,4) {$u_2$};
        \node[] (t) at (5.5,3.5) {};
        \draw[] (u2)--(v)--(u1)--(u2)--(t);
    \end{tikzpicture}
    \caption{Case 2.}
    \label{fig:ml_kw2}
  \end{subfigure}
\caption{Black vertices belong to $X_{G'}$.}
\end{figure}

        Consequently, there exists a spanning tree of $G'$ where every vertex of $Y_{G'}$ is a leaf,
        thus $|Y_{G'}| \leq \ml(G')$ follows.
    \end{claimproof}

    Lastly, we apply the result of Kleitman and West.
    If $|Y_{G'}| \geq 3$, add at most $|Y_{G'}| \leq \ml(G')$ edges connecting vertices of $Y_{G'}$,
    so that the resulting graph $G''$ has minimum degree at least $3$.
    Each such addition increases the max-leaf number by at most $2$,
    therefore it holds that $\ml(G'') \leq \ml(G') + 2\ml(G')$, while $|X_{G''}| = |X_{G'} \cup Y_{G'}| > |X_{G'}|$,
    and since $|X_{G''}| \leq 4\ml(G'')$ it follows that $|X_G| \leq 12\ml(G)$.
    It remains to consider the case where $1 \leq |Y_{G'}| \leq 2$.
    Then, it holds that $|X_{G'}| \geq 3$, since otherwise it follows that
    either $X_{G'} = \emptyset$ or $Y_{G'} = \emptyset$.
    Consequently, add at most $2$ edges per vertex of $Y_{G'}$, so that the resulting graph $G''$ has minimum
    degree $3$, therefore $\ml(G'') \leq \ml(G') + 8$,
    while $|X_{G''}| = |X_{G'} \cup Y_{G'}| > |X_{G'}|$,
    and since $|X_{G''}| \leq 4\ml(G'')$ it follows that $|X_G| \leq 4\ml(G) + 32$.
\end{proof}

We will solve \COC: for a given $\ell$ we want
to calculate the minimum number of vertices $p$ such that there exists a
separator $S$ of size at most $p$ with all components of $G - S$ having
size at most $\ell$. To obtain an algorithm for \UVI, we will try all possible
values of $\ell$ and select the solution which minimizes $\ell+p$.

Our main lemma is now the following:

\begin{lemmarep}\label{lem:ml}
    Let $G=(V,E)$ be a graph and $X$ be a set of
    vertices such that all vertices of $V\setminus X$ have degree at most $2$ in
    $G$.  For all positive integers $\ell,p$, if there exists a separator $S$ of
    size at most $p$ such that all components of $G - S$ have size at most
    $\ell$, then there exists such a separator $S$ that also satisfies the
    following property: for every cycle $C$ of $G$ with $C\cap X\neq \emptyset$ we
    either have $C\cap S=\emptyset$ or $C\cap X\cap S\neq \emptyset$.
\end{lemmarep}
    
\begin{proof}
    We will show that if we have a separator $S$ that does not satisfy the
    property, then we can obtain an equally good separator that contains strictly
    more elements of $X$. Applying this argument exhaustively will yield the lemma.
    
    Fix some separator $S$ of size at most $p$ such that all components of
    $G - S$ have size at most $\ell$. Suppose that we have a cycle $C$ in
    $G$ of length $t$, say $C=v_0,v_1,\ldots,v_{t-1},v_0$ such that $C$ contains
    some vertices of $X$ and some vertices of $S$, but no vertex that belongs to
    both $X$ and $S$. Suppose that $C\cap
    S=\{v_{i_0},v_{i_1},\ldots,v_{i_{r-1}}\}$, where $r=|C\cap S|$ and
    $i_0<i_1<\ldots<i_{r-1}$. We claim that the set $S' = (S\setminus
    C)\cup\{v_{i_0+1},v_{i_1+1},\ldots,v_{i_{r-1}+1}\}$ (where additions are modulo
    $t$) is a separator of the same size as $S$ which also leaves components of
    size at most $\ell$ in $G - S'$. Informally, what we are doing is
    replacing every vertex of $C\cap S$ with the next vertex in the cycle $C$. It
    is clear that $|S|=|S'|$. Furthermore, because $C\cap S\cap X=\emptyset$, all
    vertices of $C\cap S$ have degree $2$ in $G$ and one of their neighbors is in
    $S'$.  Let $D_j$ be the component of $G - S$ that is adjacent to
    $v_{i_j}$ and $v_{i_{j-1}}$ (where subtraction is done modulo $r$). We observe
    that $G - S'$ has the component $D_j\cup\{v_{i_j}\}\setminus\{
    v_{i_{j-1}+1}\}$, which has the same size as $D_j$. Here we are using the fact
    that $v_{i_j}$ has degree $2$ and its neighbor $v_{i_j+1}$ is in $S'$. Hence,
    this procedure does not increase the size of any component (in fact,
    if $C\cap S'\cap X \neq \emptyset$, the size of some components,
    which previously contained vertices of $C \cap S' \cap X$, is reduced,
    since such components are divided in $G-S'$).
    If the new separator $S'$ has
    $C\cap S'\cap X \neq \emptyset$, we have increased the intersection of the
    separator with $X$.  If not, we repeat this process.  Since at each step the
    intersection between the separator and $X$ cannot decrease, we eventually
    obtain a separator that satisfies the condition of the lemma.
\end{proof}
    
We are now ready to state the main result of this section.

\begin{theoremrep}\label{thm:ml}
    \UVI{} can be solved in time $2^{\bO(\ml)}n^{\bO(1)}$.
\end{theoremrep}

\begin{proof}
    We will solve \COC{} for a given maximum component size
    $\ell$.  As explained, we can reduce \UVI{} to this problem
    by trying out all values of $\ell$. We are given a graph $G$ and let $X$ be
    the set of vertices of degree at least $3$. By \cref{lem:kw}, $|X| \leq 12\ml(G) + 32$.
    Let $S$ be a separator of minimum size such that $G - S$ has only components
    of size at most $\ell$. Among all such separators, select an $S$ such that
    $S\cap X$ is maximized. Fix this separator $S$ for the analysis of the rest of
    the algorithm.
    
    Our algorithm as a first step guesses $S\cap X$, removes these vertices from
    the graph and sets $p:=p-|S\cap X|$. In the remainder, we focus on the case
    where this guess was correct. To simplify the rest of the presentation we
    assume that some vertices of the graph may be marked as undeleteable, meaning
    that we are only looking for separators that do not contain such vertices.
    Initially, the vertices of $X\setminus S$ are undeleteable.
    
    We now perform a series of polynomial-time simplification steps. First, if the current graph contains a connected component of size at most $\ell$, we remove
    this component. This is clearly correct. Second, if the graph contains a
    component where all vertices have degree at most $2$, we compute in polynomial
    time an optimal deletion set for this component, update $p$ appropriately,
    and remove this component.
    
    We are now in a situation where every cycle $C$ in our current graph must
    contain a vertex of $X$. By \cref{lem:ml} any such cycle $C$ must have $C\cap
    S=\emptyset$. We now contract all the vertices of $C$ into a single vertex and attach to this vertex $|C|-1$ leaves. We mark this new vertex as undeletable,
    that is, we will only look for separators $S$ that do not contain it. We also
    place this vertex in $X$. It is not hard to see that this transformation is
    correct, since if the optimal solution does not place any vertex of $C$ into
    $S$, we can replace $C$ with a single undeletable vertex which contributes
    $|C|$ to the size of its component.
    
    Performing the above exhaustively results in a graph that contains no cycles.
    The problem can now easily be solved optimally via dynamic programming in
    polynomial time.
\end{proof}

\section{Modular Width}

In this section we revisit an algorithm of~\cite{Gima2023} establishing
that \WVI{} can be solved in time
$2^{\bO(\mw)}n^{\bO(1)}$, on graphs of modular width $\mw$, but only if weights are
polynomially bounded in $n$ (or equivalently, if weights are given in unary).
It was left as an explicit open problem in~\cite{Gima2023} whether this
algorithm can be extended to the case where weights are given in binary and can
therefore be exponential in $n$. We resolve this problem positively, by showing
how the algorithm of \cite{Gima2023} can be modified to work also in this
case, without a large increase in its complexity.

The high-level idea of the algorithm of \cite{Gima2023} is to perform
dynamic programming to solve the related \textsc{Weighted Component Order
Connectivity} problem. In this problem we are given a target component weight
$\ell$ and a deletion budget $p$ and are asked if it is possible to delete from
the graph a set of vertices with total weight at most $p$ so that the maximum
weight of any remaining component is at most $\ell$. Using this algorithm as a
black box, we can then solve \textsc{Weighted Vertex Integrity} by iterating
over all possible values of $\ell$, between $1$ and the target vertex
integrity. If vertex weights are polynomially bounded, this requires a
polynomial number of iterations, giving the algorithm of \cite{Gima2023}.
However, if weights are given in binary, the target vertex integrity could be
exponential in $n$, so in general, it does not appear possible to guess the
weight of the heaviest component in an optimal solution.

Our contribution to the algorithm of \cite{Gima2023} is to observe that
for graphs of modular width $\mw$ the weight of the heaviest component may take
at most $2^{\bO(\mw)}n$ distinct possible values. Hence, for this parameter,
guessing the weight of the heaviest component in an optimal solution can be
done in FPT time. We can therefore plug in this result to the algorithm of~\cite{Gima2023}
to obtain an algorithm for \WVI{} with binary weights running in time $2^{\bO(\mw)}n^{\bO(1)}$.

Our observation is based on the following lemma.

\begin{lemmarep}\label{lem:mw}
    Let $G=(V,E)$ be an instance of \WVI.
    There exists an optimal solution using a separator $S$ such that for all connected
    components $D$ of $G - S$ and modules $M$ of $G$ we have one of the
    following: (i) $M\cap D=\emptyset$, (ii) $M\subseteq D$,
    or (iii) $D\subseteq M$.
\end{lemmarep}

\begin{proof}
    Let $S$ be a separator giving an optimal solution, that is a solution
    minimizing the sum of the weight of $S$ and the weight of the heaviest
    component of $G - S$. Our strategy is to show that if $S$ does not
    satisfy the condition of the lemma, then $S$ contains some vertex $z$ which in
    the terminology of \cite{Gima2023} is redundant, that is, $z$ has
    neighbors in at most one connected component of $G - S$. As observed in
    \cite{Gima2023}, this implies that $S\setminus \{z\}$ is a separator that
    gives a solution that is at least as good as that given by $S$, so we can
    remove $z$ from $S$. Repeating this exhaustively will produce a separator $S$
    that satisfies the conditions of the lemma.
    
    Suppose then that $S$ is a separator such that for some component $D$ of
    $G - S$ and some module $M$ we have $M\cap D\neq\emptyset$, $D\setminus
    M\neq \emptyset$, and $M\setminus D\neq \emptyset$. Let $x\in D\cap M$ and
    $y\in D\setminus M$ such that $x,y$ are adjacent (such $x,y$ must exist, since
    $D$ is connected). Since $M$ is a module and $y\not\in M$, we have that $y$ is
    adjacent to all of $M$.  Hence, $M\setminus S\subseteq D$ because all vertices
    of $M\setminus S$ are adjacent to $y\in D$.  It follows that each vertex $z\in
    M\setminus D$ must belong to the separator $S$.  We claim that $z$ is
    redundant, that is, $z$ only has neighbors in $D$ and in no other connected
    component of $G - S$.  This is not hard to see, since each neighbor of
    $z$ is either in $M$ (so is adjacent to $y\in D$) or is adjacent to all of $M$
    (hence also to $x$).
\end{proof}
    
We also recall the algorithmic result of \cite{Gima2023}.

\begin{theorem}\label{thm:mwold}(\cite{Gima2023})
    There exists an algorithm that takes as input a vertex-weighted graph $G=(V,E)$ and
    an integer $\ell$ and computes the minimum integer $p$ such that there
    exists a separator $S$ of $G$ of weight at most $p$ such that each component of $G - S$ has
    weight at most $\ell$. The algorithm runs in time $2^{\bO(\mw)}n^{\bO(1)}$, where
    $n$ is the size of the input.
\end{theorem}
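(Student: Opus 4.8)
The plan is a bottom-up dynamic program along the modular decomposition tree $T$ of $G$, which is computable in polynomial time and each of whose internal nodes has at most $\mw$ children. For a node $t$ write $M_t\subseteq V(G)$ for the set of vertices in the subtree below $t$, so that $G[M_t]$ is obtained by substituting $G[M_{t_1}],\dots,G[M_{t_k}]$ (with $k\le\mw$) into a quotient graph $H_t$ on $k$ vertices recording which pairs of these modules are completely joined. The one structural fact we use is that completely joined modules behave coarsely: if $M_{t_i}$ and $M_{t_j}$ are completely joined and both retain a surviving vertex, then the \emph{entire} surviving parts $M_{t_i}\setminus S$ and $M_{t_j}\setminus S$ lie in one component of $G-S$, and transitively the surviving part of any group of modules connected in $H_t$ forms a single component. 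Consequently, for a module $M_t$ we need to pass only one of two things up to the parent: if in the final solution $M_t$ has a surviving vertex adjacent to a surviving vertex outside $M_t$ (call $M_t$ \emph{open}), then all of $M_t\setminus S$ merges outward and the internal structure is irrelevant, so the parent only needs the set of attainable surviving weights of $M_t$; if not (call $M_t$ \emph{closed}), the components of $G[M_t]-S$ are components of $G-S$ and the parent only needs the cheapest way to cut them all down to weight at most $\ell$. We also assume every vertex has weight $\le\ell$, since heavier vertices must be deleted.

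Hence at each node $t$ we store: $c_t$, the minimum weight of an $S\subseteq M_t$ such that every component of $G[M_t]-S$ has weight at most $\ell$; and $\mathcal Q_t\subseteq\{0,\dots,\ell\}$, the set of values $q=w(M_t\setminus S)\le\ell$ attainable by some $S\subseteq M_t$ (the associated deletion weight being forced to $w(M_t)-q$). For a leaf $t=\{v\}$ we put $c_t=0$ and $\mathcal Q_t=\{0,w(v)\}$. For an internal node $t$ with children $t_1,\dots,t_k$: since an open module's entire surviving part merges outward regardless of $H_t$, the profile $\mathcal Q_t$ is simply the $\ell$-capped sumset $\mathcal Q_{t_1}\oplus\cdots\oplus\mathcal Q_{t_k}$, obtained by $k\le\mw$ successive capped sumsets; and for $c_t$ --- where the surviving pieces of $M_t$ are genuine components of $G-S$ --- we branch over the at most $2^{\mw}$ subsets $B\subseteq[k]$ of children to delete entirely (paying $\sum_{i\in B}w(M_{t_i})$), delete those vertices from $H_t$, and examine the connected components of the induced quotient graph on $A=[k]\setminus B$: a singleton component $\{i\}$ contributes a self-contained piece of cost $c_{t_i}$, while a component $Q$ with $|Q|\ge2$ forces $\bigcup_{i\in Q}(M_{t_i}\setminus S)$ into one component of weight $\sum_{i\in Q}q_i$ with each $q_i\in\mathcal Q_{t_i}$, so the best option is to \emph{maximize} $\sum_{i\in Q}q_i$ subject to $\sum_{i\in Q}q_i\le\ell$ --- again a capped-sumset computation --- at a cost of $\sum_{i\in Q}w(M_{t_i})$ minus this maximum. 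Summing over all components of the residual quotient graph and taking the best $B$ yields $c_t$, and the answer $p$ is $c_r$ at the root, where $M_r=V(G)$ has no outside.

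Correctness is two routine implications: any separator at node $t$ is consistent with the branch whose $B$ is its set of fully deleted children, the connected components of the residual quotient graph being exactly the pieces into which $M_t\setminus S$ partitions (each self-contained piece attaining at least $c_{t_i}$, each merged piece at least $\sum_{i\in Q}w(M_{t_i})$ minus the capped maximum); conversely every choice the recursion makes reassembles into a separator of exactly the recorded weight with all components of weight at most $\ell$. For the running time there are $\bO(n)$ tree nodes, and at each the $2^{\mw}$ branches together with $\bO(\mw)$ capped sumsets on sets of size at most $\ell+1$ cost $2^{\bO(\mw)}\cdot\mathrm{poly}(\ell)$; this is $2^{\bO(\mw)}n^{\bO(1)}$ whenever $\ell$ --- equivalently, the weights --- is polynomially bounded. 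The one genuinely delicate point, and exactly where one must exploit the modular structure further (e.g.\ via \cref{lem:mw}) to keep the bound alive even for binary weights, is the size of the tables $\mathcal Q_t$: a priori the set of attainable surviving weights is exponential, and one needs that at each module only $2^{\bO(\mw)}|M_t|$ of these values can arise as the weight of a heaviest component in an optimal sub-solution, which confines every table to $2^{\bO(\mw)}n$ entries and every sumset to $2^{\bO(\mw)}n^{\bO(1)}$ time.
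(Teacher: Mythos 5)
First, a point of reference: the paper does not prove this statement at all --- Theorem~\ref{thm:mwold} is imported from~\cite{Gima2023} and used as a black box, so there is no in-paper proof to compare against. Judged on its own, your dynamic program over the modular decomposition is sound: the open/closed dichotomy for a module, the observation that the surviving parts of children spanned by a connected subgraph of the quotient merge into a single component (so the internal structure of an open module matters only through its surviving weight), the $2^{\mw}$ branching over fully deleted children, and the capped-sumset computations fit together into a correct \emph{pseudo-polynomial} algorithm, running in time $2^{\bO(\mw)}\cdot\mathrm{poly}(n,\ell)$. Since the paper explicitly describes the algorithm of~\cite{Gima2023} as applying when weights are polynomially bounded (equivalently, given in unary, so that $\ell$ is polynomial in the input size), this matches the statement being cited, and under that reading your proof is essentially complete.

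The genuine gap is your final paragraph. The tables $\mathcal{Q}_t$ record the attainable subset sums of $w$ restricted to $M_t$, and with binary weights these can have $2^{|M_t|}$ distinct values; Lemma~\ref{lem:mw} and the $2^{\bO(\mw)}n$ count in Theorem~\ref{thm:mw} concern the possible weights of the \emph{heaviest component of an optimal \WVI{} solution}, which bounds the number of values of $\ell$ tried in the outer loop --- it says nothing about which surviving weights $w(M_t\setminus S)$ arise inside the \COC{} recursion, so it cannot confine $\mathcal{Q}_t$ to $2^{\bO(\mw)}n$ entries. Moreover, no repair along these lines is possible: already on a weighted clique $K_n$, which has modular width $2$, the problem asks for a minimum-weight $S$ with $w(S)\ge w(V)-\ell$, i.e., minimum subset sum above a threshold, which is weakly NP-hard. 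Your capped-sumset maximization over a merged quotient component is exactly this knapsack-type step and genuinely requires $\mathrm{poly}(\ell)$ time. You should therefore delete the speculative fix and state the result for unary weights; whether that reading suffices for the way Theorem~\ref{thm:mw} invokes Theorem~\ref{thm:mwold} with a possibly exponential $\ell$ is a question about the paper rather than about your proof.
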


Putting \cref{lem:mw,thm:mwold} together we obtain the main result
of this section.

\begin{theoremrep}\label{thm:mw}
    There exists an algorithm that solves
    \WVI{} in time $2^{\bO(\mw)}n^{\bO(1)}$, where $\mw$ is
    the modular width of the input graph, $n$ is the size of the input, and weights
    are allowed to be written in binary.
\end{theoremrep}
    
\begin{proof}
    Fix some optimal separator $S$ satisfying the conditions of \cref{lem:mw}. Our
    strategy is to use these conditions to guess the weight of the heaviest
    component of $G - S$. If this weight is $\ell$, then we can simply call
    the algorithm of \cref{thm:mwold}. More precisely, we will show that by
    considering $\bO(2^{\mw}n)$ distinct values, we are guaranteed to find
    $\ell$.  Hence, by executing the algorithm of \cref{thm:mwold}
    $\bO(2^{\mw}n)$ times and picking the best solution we find the optimal
    solution.
    
    Consider the modular decomposition tree associated with $G$,
    which can be computed in polynomial time~\cite{McConnellS99} and is
    recursively constructed as follows: if $G=(V,E)$ has at most $k$ vertices, then
    $G$ has a root labeled $V$ with $k$ children, each corresponding to a vertex of
    $V$; while if $|V|>k$, then $V$ can be partitioned into $k'\le k$ modules
    $V_1,\ldots, V_{k'}$, so we construct a root labeled $V$ and give it as
    children the roots of the trees constructed for $G[V_i]$, for $i\in[k']$. Each
    node of the constructed tree is labeled with a module of $G$.
    
    Let $D$ be the heaviest component of $G - S$. Find the node of the
    modular decomposition that is as far from the root as possible and satisfies
    that the corresponding module $M$ has $D\subseteq M$. (Such a node exists,
    since $D\subseteq V$.) The module $M$ can be decomposed into $k'\le k$ modules
    $M_1,\ldots, M_{k'}$. For each such module we have either $M_i\cap D=\emptyset$
    or $M_i\subseteq D$, by \cref{lem:mw} and because if we had $D\subseteq M_i$,
    this would contradict our choice of $M$. It must therefore be the case that for
    some $I\subseteq [k']$ we have $D=\bigcup_{i\in I}M_i$. 
    
    We now observe that we have $\bO(n)$ choices of $M$ (since the modular
    decomposition has $\bO(n)$ nodes), and $2^{\mw}$ choices for $I$. Hence, there
    are $\bO(2^{\mw}n)$ possible values of the weight of the heaviest component
    $\ell$.
\end{proof}

\section{Vertex Cover Number}

In this section, we design single-exponential algorithms for \VI{} parameterized by vertex cover number.
We suppose that a minimum vertex cover $C$ of size $\vc$ is given since it can be computed in time
$\bO(1.2738^{\vc} + {\vc} n)$~\cite{ChenKX10}.
We start by presenting an algorithm for \UVI,
before moving on to the weighted version of the problem.


\begin{theoremrep}\label{thm:vc:unweighted}
    \UVI{} can be solved in time $5^{\vc}n^{\bO(1)}$.
\end{theoremrep}

\begin{proof}
    Let $C$ be a vertex cover of $G$ of size $|C| = \vc$.
    Without loss of generality, we assume that $k < \vc+1$, since otherwise $C$ is a
    separator that realizes the vertex integrity at most $k$.  For the analysis,
    fix an optimal separator $S$, which by \cref{prop:irredundant} can be assumed
    to be irredundant. We first guess $S \cap C$, delete its vertices, and decrease
    $k$ by $|S \cap C|$. In the remainder, assume that $S \cap C = \emptyset$. Let
    $I = V \setminus C$ be the independent set of the graph. We have to decide for
    each vertex of $I$ whether to place it in $S$ or not.
    
    We keep track of the vertices of $I$ which have been marked as belonging to $S$, denoted by $I_S$,
    those that have been marked as not belonging to $S$, denoted by $I_{\Bar{S}}$,
    and those which are undecided, denoted by $I_U$ (initially $I = I_U$).
    We also keep track of the connected components of the graph induced by $C \cup I_{\Bar{S}}$.
    Now, as long as there exists an undecided vertex $v \in I_U$
    such that all its neighbors in $C$ are in the same connected component,
    we mark that $v$ is not in $S$ and move it to $I_{\Bar{S}}$ (because $v$ would be redundant).
    Consider then an undecided $v \in I_U$ that has neighbors in $C$ which are in distinct components
    of the graph induced by $C \cup I_{\Bar{S}}$.
    We consider two cases: $v \in S$ and $v \notin S$.
    In the first case, we remove $v$ from the graph and decrease $k$ by $1$.
    In the latter, we observe that the number of connected components made up of vertices
    of $C \cup I_{\Bar{S}}$ has decreased.
    We continue this branching procedure until $k < 0$ (in which case we reject);
    or all vertices have been decided (in which case we evaluate the solution).
    The algorithm is correct, assuming that our guess of $S \cap C$ was correct,
    because for all vertices of $I$ we either know that we have made the correct choice
    (because $S$ is irredundant) or we consider both possible choices.
    
    For the running time, we define a potential function as the sum of $k$ plus the
    number of connected components induced by $C \cup I_{\Bar{S}}$.
    Then for both branches, the potential function decreases by at least 1.
    Moreover, the value of the potential function is bounded by $k - |C \cap S| + |C \setminus S|
    \leq 2\vc - 2|C \cap S|$.
    Note that the branching algorithm is applied to the graph after deleting $C \cap S$.
    Therefore, the running time is bounded by
    \begin{align*}
        \sum_{S \cap C \subseteq C} 2^{2\vc - 2|C\cap S|} n^{\bO(1)}&=
        \sum_{S\subseteq C} 4^{\vc - |C\cap S|}n^{\bO(1)} \\
        &= \sum_{S\subseteq C} 4^{\vc - |C\cap S|}1^{ |C\cap S|} n^{\bO(1)}\\
        &= \sum_{i=0}^\vc \binom{\vc}{i} 4^{\vc - i}1^{i}n^{\bO(1)} \\
        &= 5^\vc n^{\bO(1)},
    \end{align*}
    and the statement follows.
\end{proof}


We now move on to the weighted case of the problem. It is clear that
\WVI{} is FPT parameterized by vertex cover, due to~\cref{thm:mw} and
the relation between modular-width and vertex cover. However,
this gives a double-exponential dependence on $\vc$, as $\mw\le 2^{\vc}+\vc$
and there are some graphs for which this is essentially tight. We would like to
obtain an algorithm that is as efficient as that of \cref{thm:vc:unweighted}.
The algorithm of \cref{thm:vc:unweighted}, however, cannot be applied to the
weighted case because the case of the branching where we place a vertex of the
independent set in the separator is not guaranteed to make much progress (the
vertex could have very small weight compared to our budget).

Before we proceed, it is worth thinking a bit about how this can be avoided.
One way to obtain a faster FPT algorithm would be, rather than guessing only
the intersection of the optimal separator $S$ with the vertex cover $C$, to
also guess how the vertices of $C\setminus S$ are partitioned into connected
components in the optimal solution. This would immediately imply the decision
for all vertices of the independent set: vertices with neighbors in two
components must clearly belong to $S$, while the others cannot belong to $S$ if
$S$ is irredundant. This algorithm would give a complexity of
$\vc^{\bO(\vc)}n^{\bO(1)}$, however, because the number of partitions of $C$ is
slightly super-exponential.

Let us sketch the high level idea of how we handle this. Our first step is,
similarly to~\cref{thm:mw}, to calculate the weight $w_{\max}$ of the most
expensive connected component of the optimal solution. For this, there are at
most $2^{\vc}+n$ possibilities, because this component is either a single
vertex, or it has a non-empty intersection with $C$. However, if we fix its
intersection with $C$, then this fixes its intersection with the independent
set: the component must contain (by irredundancy) exactly those vertices of the
independent set all of whose neighbors in $C$ are contained in the component.
Having fixed a value of $w_{\max}$ we simply seek the best separator so that
all components have weight at most $w_{\max}$. The reason we perform this
guessing step is that this version of this problem is easier to decompose: if
we have a disconnected graph, we simply calculate the best separator in each
part and take the sum (this is not as clear for the initial version of \VI).

Suppose then that we have fixed $w_{\max}$, how do we find the best partition
of $C$ into connected components? We apply a win/win argument: if the optimal
partition has a connected component that contains many (say, more than
$\vc/10$) vertices of $C$, we simply guess the intersection of the component
with $C$ and complete it with vertices from the independent set, as previously,
while placing vertices with neighbors inside and outside the component in the
separator. If the weight of the component is at most $w_{\max}$, we recurse
in the remaining instance, which has vertex cover at most $9\vc/10$. The
complexity of this procedure works out as $T(\vc) \le 2^{\vc} \cdot T(9\vc/10) =
2^{\bO(\vc)}$.

What if the optimal partition of $C$ only has components with few vertices of
$C$? In that case we observe that we do not need to compute the full partition
(which would take time $\vc^{\vc}$), but it suffices to guess a good bipartition
of $C$ into two sets, of roughly the same size (say, both sets have size at
least $2\vc/5$), such that the two sets are a coarsening of the optimal
partition. In other words, we compute two subsets of $C$, of roughly equal
size, such that the intersection of each connected component with $C$ is
contained in one of the two sets. This is always possible in this case, because
no connected component has a very large intersection with $C$. Now, all vertices
of $I$ which have neighbors on both sides of the bipartition of $C$ must be
placed in the separator. But once we do this, we have disconnected the instance
into two independent instances, each of vertex cover at most $3\vc/5$.
The complexity of this procedure again works out as $T(\vc)\le 2^{\vc}\cdot2\cdot
T(3\vc/5) = 2^{\bO(\vc)}$.

Let us now proceed to the technical details.  To solve \WVI,
we first define the annotated and optimization version of the
problem.

\problemdef{\textsc{Annotated Weighted Vertex Integrity with Vertex Cover}}
{A vertex-weighted graph $G = (V, E, w)$, a vertex cover $C$ of $G$, an integer $w_{\max}$.}
{Find a minimum weight irredundant $\wvi$-set $S \subseteq V \setminus C$
such that $w(D) \le w_{\max}$  for all $D \in \cc(G-S)$.
If there is no such $S$, report NO.}

Then we give an algorithm that solves \textsc{Annotated Weighted Vertex
Integrity with Vertex Cover}.

\begin{theoremrep}\label{thm:annotatedWVI}
    {\sc Annotated Weighted Vertex Integrity with Vertex Cover} can be solved in time $2^{\bO(|C|)}n^{\bO(1)}$.
\end{theoremrep}

\begin{proof}
    Let $\mathcal{I} = (G, C, w_{\max})$ be an instance of \textsc{Annotated Weighted Vertex Integrity
    with Vertex Cover}, where $\OPT(\mathcal{I})$ denotes the weight of the optimal solution $S$,
    provided such a set $S$ exists.
    Additionally, let $k = |C|$ and $I = V \setminus C$.

    Assume that $N(v) = \emptyset$ for some $v \in I$.
    We claim that if $w(v) > w_{\max}$,
    then there is no irredundant $\wvi$-set $S \subseteq V \setminus C$
    such that $w(D) \le w_{\max}$  for all $D \in \cc(G-S)$ and we report NO.
    Assume there existed such a set $S$.
    Then, since $w(v) > w_{\max}$, it follows that $v \in S$.
    However, since $N(v) = \emptyset$, it follows that no component of $G-S$ contains neighbors of $v$,
    thus $v$ is redundant, which is a contradiction.
    If on the other hand it holds that $w(v) \leq w_{\max}$,
    then reduce $\mathcal{I}$ to $\mathcal{I}' = (G - v, C, w_{\max})$,
    where $\OPT(\mathcal{I}) = \OPT(\mathcal{I}')$.
    The correctness of the reduction rule is easy to see.
    In the following, assume $N(v) \neq \emptyset$, for all $v \in I$.
    We perform branching on whether there exists a component $D$ in $G -S$ such that $|D \cap C| \ge k/10$ or not.
    
    In the first case, we guess $D \cap C$.
    Since $S$ is irredundant,
    it holds that $D \setminus C = \setdef{v \in I}{N(v) \subseteq D \cap C}$.
    Moreover, let $I_R \subseteq I$ be the set of vertices with neighbors in both $D \cap C$ and $C \setminus D$.
    If $D$ is not connected or $w(D) > w_{\max}$, we immediately reject.
    Otherwise, notice that all vertices $v \in I_R$ must belong to $S$,
    thus we obtain a smaller instance $\mathcal{I}' = (G - (D \cup I_R), C \setminus D, w_{\max})$,
    where $|C \setminus D| \leq 9k/10$ and $\OPT(\mathcal{I}) = w(I_R) + \OPT(\mathcal{I}')$.

    In the second case, i.e.~when for all components $D \in \cc(G-S)$ it holds that $|D \cap C| < k/10$,
    we first show the following claim.
    \begin{claim}
        There exists a set $A \subseteq C$ of size $k/2 \le |A| \le 3k/5$
        such that every component $D$ of $G - S$ satisfies either
        $D \cap C \subseteq A$ or $D \cap C \subseteq C \setminus A$.
    \end{claim}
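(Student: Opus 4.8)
The plan is to reduce the claim to a purely combinatorial packing statement. Since $S \subseteq V \setminus C$, every vertex of $C$ lies in exactly one component of $G - S$; hence the components induce a partition of $C$ into blocks $P_1, \dots, P_t$, where $P_j = D_j \cap C$ for the $j$-th component $D_j$ (discarding empty intersections). By the hypothesis of the second case, each block satisfies $|P_j| < k/10$, and clearly $\sum_{j} |P_j| = |C| = k$. So it suffices to show: given items of sizes strictly less than $k/10$ summing to $k$, one can split them into two bundles, one of which has total size in $[k/2, 3k/5]$.

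First I would run the obvious greedy procedure: order the blocks arbitrarily as $P_1, \dots, P_t$ and add them one at a time to a set $A$, stopping the first time $|A| \ge k/2$. This moment is reached because the blocks sum to $k \ge k/2$; let $P_i$ be the last block added. Just before adding $P_i$ we had $|A| < k/2$, so afterwards $|A| < k/2 + |P_i| < k/2 + k/10 = 3k/5$. Together with the stopping condition this gives $k/2 \le |A| < 3k/5$, as required. If one wants to be careful about integrality, one uses the threshold $\lceil k/2 \rceil$ and bound $\lfloor 3k/5 \rfloor$; the gap of at least $k/10$ between the block-size bound and the slack comfortably absorbs the rounding.

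Finally I would note that $A$ is, by construction, the union of the first $i$ blocks, and $C \setminus A$ is the union of the remaining blocks. Hence for every component $D$ of $G - S$, the set $D \cap C$ is one of the blocks $P_j$ and is therefore entirely contained in $A$ or entirely contained in $C \setminus A$, which is exactly the statement of the claim. There is essentially no hard step here: the only points to verify are that the threshold $k/2$ is actually crossed (guaranteed since the blocks sum to $|C| = k$) and that the overshoot is bounded by the largest block size $< k/10$ (which is precisely the case hypothesis).
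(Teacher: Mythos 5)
Your proof is correct and is essentially the same as the paper's: the paper also takes $A$ to be the union of an initial segment of the blocks $D_i \cap C$, stopping at the first index where the running total reaches $k/2$, and bounds the overshoot by the maximum block size $< k/10$ to get $|A| \le k/2 + k/10 = 3k/5$. Your additional remark about handling integrality via $\lceil k/2\rceil$ is a small bonus the paper omits, but the argument is otherwise identical.
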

    
    \begin{claimproof} 
        Let $\mathcal{D} = \{D_1, \ldots, D_p\}$ be the set of connected components of $G-S$.
        Note that $\bigcup_{D \in \mathcal{D}} (D \cap C) = C$ since $S \cap C = \emptyset$.
        By the assumption, $|D \cap C| < k/10$ holds for all $D \in \mathcal{D}$.
        Set $A = \bigcup_{i=1}^{j} (D_i \cap C)$ where $j$ is the maximum index such that $|A \setminus D_j| < k/2$.
        By the definition of $A$, $|A| \geq k/2$ holds.
        Moreover, since $|D_j| < k/10$, $|A| \leq k/2 + k/10 = 3k/5$.
        Thus, the claim holds.
    \end{claimproof} 
    
    Next, we guess $A \subseteq C$ by considering all possible subsets of $C$ whose sizes
    are between $k/2$ and $3k/5$.
    Since every component $D$ of $G - S$ satisfies either $D \cap C \subseteq A$ or
    $D \cap C \subseteq C \setminus A$,
    all vertices of $I_R \subseteq I$ must belong to $S$,
    where $v \in I_R$ if it has neighbors in both $A$ and $C \setminus A$.
    In that case, we obtain two separate instances $\mathcal{I}_1 = (G[A \cup I_A], A, w_{\max})$ and
    $\mathcal{I}_2 = (G[(C \setminus A) \cup I_{C \setminus A}], C \setminus A, w_{\max})$,
    where $I_Z = \setdef{v \in I}{N(v) \subseteq Z}$ for $Z \in \{A, C \setminus A\}$,
    and $\OPT(\mathcal{I}) = w(I_R) + \OPT(\mathcal{I}_1) + \OPT(\mathcal{I}_2)$.
    
    Consequently, for both cases we obtain smaller instances and hence,
    we only have to recursively compute the annotated problems.
    Finally, we analyze the running time of our algorithm.
    Let $T(k)$ denote the running time when $|C| = k$.
    Then it holds that $T(k) \le 2^k n^{\bO(1)} \cdot T(9k/10) + 2^k n^{\bO(1)} \cdot 2T(3k/5) + n^{\bO(1)}$,
    thus $T(k) = 2^{\bO(k)} n^{\bO(1)}$ follows.
\end{proof}

\begin{theoremrep}\label{thm:vc:weighted}
    \WVI{} can be solved in time $2^{\bO(\vc)}n^{\bO(1)}$.
\end{theoremrep}

\begin{proof}
    Let $C$ be a vertex cover of $G$ of size $|C| = \vc$ and $I = V \setminus C$.
    For the analysis, fix an optimal separator $S$, which by \cref{prop:irredundant} can be assumed
    to be irredundant.
    We first guess $S \cap C$,
    and set $C' = C \setminus S$.
    In the following, let for $v \in I$, $N(v)$ denote the neighborhood of $v$ in $G$,
    while $N'(v) = N(v) \cap C'$.
    Let $D_{\max}$ denote a connected component of $G-S$ of maximum weight $w(D_{\max}) = w_{\max}$,
    and set $D_{\max} \cap C' = A \subseteq C'$.
    We argue that given $S \cap C$, there are at most $2^{|C'|} + n \leq 2^\vc + n$ cases regarding $D_{\max}$,
    where $n = |V|$.
    If $A = \emptyset$,
    then $D_{\max} = \{v\}$ for some vertex $v \in I$ with $N(v) \subseteq S \cap C$,
    for a total of at most $|I| \leq n$ choices.
    Otherwise, it holds that $A \neq \emptyset$, and due to the irredundancy of $S$ it follows that
    $D_{\max} \setminus A = \setdef{v \in I}{\emptyset \neq N'(v) \subseteq A}$,
    while any vertex of
    $I_R = \setdef{v \in I}{N'(v) \cap A, N'(v) \cap (C' \setminus A) \neq \emptyset}$ must belong to $S$.

    Consequently, it suffices to guess $S \cap C$ and $D_{\max}$,
    assure that $D_{\max}$ is indeed connected as intended, and then compute the expression
    \[
        \OPT(\mathcal{I}) + w(S \cap C) + w(I_R) + w_{\max},
    \]
    where $\mathcal{I} = (G - (D_{\max} \cup I_R \cup (S \cap C)), C', w_{\max})$ is an instance of
    \textsc{Annotated Weighted Vertex Integrity with Vertex Cover} and $\OPT(\mathcal{I})$
    the weight of its optimal solution.
    Due to \cref{thm:annotatedWVI}, $\operatorname{OPT}(\mathcal{I})$ can be computed
    in time $2^{\bO(\vc)} n^{\bO(1)}$,
    therefore the total running time is
    $2^\vc \cdot (2^\vc + n) \cdot n^{\bO(1)} \cdot 2^{\bO(\vc)} n^{\bO(1)} = 2^{\bO(\vc)} n^{\bO(1)}$.
\end{proof}

\section{Conclusion}

We have presented a number of new results on the parameterized complexity of
computing vertex integrity. The main question that remains open is whether the
slightly super-exponential $k^{\bO(k)} n^{\bO(1)}$ algorithm, where $k$ is the
vertex integrity itself, can be improved to single-exponential. Although we
have given such an algorithm for the more restricted parameter vertex cover, we
conjecture that for vertex integrity the answer is negative. Complementing this
question, it would be interesting to consider approximation algorithms for
vertex integrity, whether trying to obtain FPT approximations in cases where
the problem is W[1]-hard, or trying to obtain almost-optimal solutions via
algorithms that run with a better parameter dependence. Again, a
constant-factor or even $(1+\varepsilon)$-approximation running in time
$2^{\bO(k)} n^{\bO(1)}$ would be the ideal goal. Do such algorithms exist or can
they be ruled out under standard complexity assumptions?



\bibliography{bibliography}


\end{document}